\newcommand{\otyp}{\mathsf{o}}
\newcommand{\arr}{\mathbin{\to}}
\newcommand{\Arr}{\mathbin{\Rightarrow}}
\newcommand{\set}[1]{\{#1\}}
\newcommand{\dom}{\mathrm{dom}}
\newcommand{\setof}[2]{\{#1\mid#2\}}
\newcommand{\Nat}{\mathbb{N}}
\newcommand{\Gg}{\mathcal{G}}
\newcommand{\Ff}{\mathcal{F}}
\newcommand{\Oo}{\mathcal{O}}
\newcommand{\Nn}{\mathcal{N}}
\newcommand{\Rr}{\mathcal{R}}
\newcommand{\Xx}{\mathcal{X}}
\newcommand{\Yy}{\mathcal{Y}}
\newcommand{\Zz}{\mathcal{Z}}
\newcommand{\varT}{\mathsf{T}}
\newcommand{\varX}{\mathsf{X}}
\newcommand{\varY}{\mathsf{Y}}
\newcommand{\varZ}{\mathsf{Z}}
\newcommand{\varx}{\mathsf{x}}
\newcommand{\vary}{\mathsf{y}}
\newcommand{\ord}{\mathsf{ord}}
\newcommand{\gar}{\mathsf{gar}}
\newcommand{\tp}{\mathsf{tp}}
\newcommand{\tr}{\mathsf{tr}}
\newcommand{\rew}{\longrightarrow}
\newcommand{\erew}{\rightsquigarrow}
\newcommand{\esubst}[3]{#1\langle#2/#3\rangle}
\newcommand{\esubstdots}[5]{#1\langle#2/#3\rangle\dots\langle#4/#5\rangle}
\NewDocumentCommand{\symb}{ o o o o o o }{\langle\IfValueT{#1}{#1}\IfValueT{#2}{,#2}\IfValueT{#3}{,#3}\IfValueT{#4}{,#4}\IfValueT{#5}{,#5}\IfValueT{#6}{,#6}\rangle}
\Crefname{lemma}{Lemma}{Lemmata}
\Crefname{equation}{Equality}{Equalities}
\Crefname{equalities}{Equalities}{Equalities}\creflabelformat{equalities}{(#2#1#3)}
\title{Higher-Order Nonemptiness Step by Step}
\author{Paweł Parys}{Institute of Informatics, University of Warsaw, Poland}{parys@mimuw.edu.pl}{https://orcid.org/0000-0001-7247-1408}{}
\authorrunning{P. Parys}
\keywords{Higher-order grammars, Nonemptiness, Model-checking, Transformation, Order reduction}
\begin{document}

\maketitle

\begin{abstract}
	We show a new simple algorithm that checks whether a given higher-order grammar generates a nonempty language of trees.
	The algorithm amounts to a procedure that transforms a grammar of order $n$ to a grammar of order $n-1$, preserving nonemptiness, and increasing the size only exponentially.
	After repeating the procedure $n$ times, we obtain a grammar of order $0$, whose nonemptiness can be easily checked.
	Since the size grows exponentially at each step, the overall complexity is $n$\textsf{-EXPTIME}, which is known to be optimal.
	More precisely, the transformation (and hence the whole algorithm) is linear in the size of the grammar, assuming that the arity of employed nonterminals is bounded by a constant.
	The same algorithm allows to check whether an infinite tree generated by a higher-order recursion scheme is accepted by an alternating safety (or reachability) automaton,
	because this question can be reduced to the nonemptiness problem by taking a product of the recursion scheme with the automaton.
	
	A proof of correctness of the algorithm is formalised in the proof assistant Coq.
	Our transformation is motivated by a similar transformation of Asada and Kobayashi (2020) changing a word grammar of order $n$ to a tree grammar of order $n-1$.
	The step-by-step approach can be opposed to previous algorithms solving the nonemptiness problem ``in one step'', being compulsorily more complicated.
\end{abstract}

\section{Introduction}

	Higher-order grammars, also known as higher-order OI grammars~\cite{Damm82,IO-OI}, generalize context-free grammars:
	nonterminals of higher-order grammars are allowed to take arguments.
	Such grammars have been studied actively in recent years, in the context of automated verification of higher-order programs.
	In this paper we concentrate on a very basic problem of language nonemptiness: is the language generated by a given higher-order grammar nonempty.
	This problem, being easy for most devices, is not so easy for higher-order grammars.
	Indeed, it is $n$\textsf{-EXPTIME}-complete for grammars of order $n$~\cite{ho-complexity}.
	
	We give a new simple algorithm solving the language nonemptiness problem.
	The algorithm amounts to a procedure that transforms a grammar of order $n$ to a grammar of order $n-1$, preserving nonemptiness, and increasing the size only exponentially.
	After repeating the procedure $n$ times, we obtain a grammar of order $0$, whose nonemptiness can be easily checked.
	Since the size grows exponentially at each step, we reach the optimal overall complexity of $n$\textsf{-EXPTIME}.
	In a more detailed view, the complexity looks even better:
	the size growth is exponential only in the arity of types appearing in the grammar;
	if the maximal arity is bounded by a constant, the transformation (and hence the whole algorithm) is linear in the size of the grammar.
	
	While a higher-order grammar is a generator of a language of (finite) trees, virtually the same object can be seen as a generator of a single infinite tree (encompassing the whole language).
	In this context, the grammars are called higher-order recursion schemes.
	The nonemptiness problem for grammars is easily equivalent to the question whether the tree generated by a given recursion scheme is accepted by a given alternating safety (or reachability) automaton;
	for the right-to-left reduction, it is enough to product the recursion scheme with the automaton.
	Thus, our algorithm solves also the latter problem, called a model-checking problem.
	This problem is decidable and $n$\textsf{-EXPTIME}-complete not only for safety or reachability automata, but actually for all parity automata,
	with multiple proofs using game semantics~\cite{Ong-lics}, collapsible pushdown automata~\cite{collapsible}, intersection types~\cite{Kobayashi-Ong}, or Krivine machines~\cite{Krivine-machine},
	and with several extensions~\cite{global,reflection,effective-selection,model,wmsou}.
	The problem for safety automata was tackled in particular by Aehlig~\cite{Aehlig07} and by Kobayashi~\cite{Kobayashi-POPL09}.
	To those algorithms we add another one.
	The main difference between our algorithm and all the others is that we solve the problem step by step, repeatedly reducing the order by one,
	while most previous algorithms work ``in one step'', being compulsorily more complicated.
	The only proofs that have been reducing the order by one, were proofs using collapsible pushdown automata~\cite{collapsible,reflection,effective-selection},
	being very technical (and contained only in unpublished appendices).
	A reduction of order was also possible for a subclass of recursion schemes, called \emph{safe} recursion schemes~\cite{easy-trees},
	but it was not known how to extend it to all recursion schemes.
	
	Comparing the two variants of the model-checking problem for higher-order recursion schemes%
	---involving safety and reachability automata, and involving all parity automata---we have to mention two things.
	First, while most theoretical results can handle all parity automata,
	actual tools solving this problem in practice mostly deal only with safety and reachability automata
	(called also trivial and co-trivial automata)~\cite{Kobayashi-jacm,horsat,ZDD,preface}.
	Second, there exists a polynomial-time (although nontrivial) reduction from the variant involving parity automata to the variant involving safety automata~\cite{parity2safety}.
	
	Our transformation is directly motivated by a recent paper of Asada and Kobayashi~\cite{word2tree}.
	They show how to transform a grammar of order $n$ generating a language of words to a grammar of order $n-1$ generating a language of trees,
	so that words of the original language are written in leaves of trees of the new language.
	Unexpectedly, this transformation increases the size of the grammar only polynomially.
	Our transformation is quite similar, but we start from a grammar generating a language of trees, not words.
	In effect, on the one hand, we do not say anything specific about the language after the transformation (except that nonemptiness is preserved),
	and on the other hand, the size growth is exponential, not polynomial.

\section{Preliminaries}

	For a number $k\in\Nat$ we write $[k]$ for $\set{1,\dots,k}$.

	The set of \emph{(simple) types} is constructed from a unique ground type $\otyp$ using a binary operation $\arr$; 
	namely $\otyp$ is a type, and if $\alpha$ and $\beta$ are types, so is $\alpha\arr\beta$.
	By convention, $\arr$ associates to the right, that is, $\alpha\arr\beta\arr\gamma$ is understood as $\alpha\arr(\beta\arr\gamma)$.
	We often abbreviate $\underbrace{\alpha\arr\dots\arr\alpha}_\ell\to\beta$ as $\alpha^\ell\arr\beta$.
	The \emph{order} of a type $\alpha$, denoted $\ord(\alpha)$, is defined by induction: 
	$\ord(\alpha_1\arr\dots\arr\alpha_k\arr\otyp)=\max(\set{0}\cup\setof{\ord(\alpha_i)+1}{i\in[k]})$;
	for example $\ord(\otyp)=0$, $\ord(\otyp\arr\otyp\arr\otyp)=1$, and $\ord((\otyp\arr\otyp)\arr\otyp)=2$.

	Having a finite set of typed nonterminals $\Xx$, and a finite set of typed variables $\Yy$, 
	\emph{terms} over $(\Xx,\Yy)$ are defined by induction:
	\begin{itemize}
	\item	every nonterminal $X\in\Xx$ of type $\alpha$ is a term of type $\alpha$;
	\item	every variable $y\in\Yy$ of type $\alpha$ is a term of type $\alpha$;
	\item	if $K_1,\dots,K_k$ are terms of type $\otyp$, then $\bullet\symb[K_1][\dots][K_k]$ and $\oplus\symb[K_1][\dots][K_k]$ are terms of type $\otyp$;
	\item	if $K$ is a term of type $\alpha\arr\beta$, and $L$ is a term of type $\alpha$, then $K\,L$ is a term of type $\beta$.
	\end{itemize}
	The type of a term $K$ is denoted $\tp(K)$.
	The order of a term $K$, written $\ord(K)$, is defined as the order of its type.
	We write $\Omega$ for $\oplus\symb{}$, and $\bullet$ for $\bullet\symb{}$.

	The construction $\oplus\symb[K_1][\dots][K_k]$ is an alternative; such a term reduces to one of the terms $K_1,\dots,K_k$.
	This construction is used to introduce nondeterminism to grammars (defined below).
	In the special case of $k=0$ (when we write $\Omega$) no reduction is possible; thus $\Omega$ denotes divergence.
	
	The construction $\bullet\symb[K_1][\dots][K_k]$ can be seen as a generator of a tree node with $k$ children;
	subtrees starting in these children are described by the terms $K_1,\dots,K_k$.
	In a usual presentation, nodes are labeled by letters from some finite alphabet.
	In this paper, however, we do not care about the exact letters contained in generated trees, only about language nonemptiness, 
	hence we do not write these letters at all (in other words, we use a single-letter alphabet, where $\bullet$ is the only letter).
	Actually, in the sequel we even do not consider trees; we rather say that $\bullet\symb[K_1][\dots][K_k]$ is convergent if all $K_1,\dots,K_k$ are convergent
	(which can be rephrased as: the language generated from $\bullet\symb[K_1][\dots][K_k]$ is nonempty if the languages generated from all $K_1,\dots,K_k$ are nonempty).
	
	A \emph{(higher-order) grammar} is a tuple $\Gg=(\Xx,X_0,\Rr)$, where $\Xx$ a finite set of typed nonterminals,
	$X_0\in\Xx$ is a starting nonterminal of type $\otyp$,
	and $\Rr$ a function assigning to every nonterminal $X\in\Xx$ a rule of the form $X\,y_1\,\dots\,y_k\to R$,
	where $\tp(X)=(\tp(y_1)\arr\dots\arr\tp(y_k)\arr\otyp)$, and $R$ is a term of type $\otyp$ over $(\Xx,\set{y_1,\dots,y_k})$.
	The order of a grammar is defined as the maximum of orders of its nonterminals.
	
	Having a grammar $\Gg=(\Xx,X_0,\Rr)$, 
	for every set of variables $\Yy$ we define a \emph{reduction relation} $\rew_\Gg$ between terms over $(\Xx,\Yy)$ and sets of such terms, as the least relation such that
	\begin{bracketenumerate}
	\item	$X\,K_1\,\dots\,K_k\rew_\Gg \set{R[K_1/y_1,\dots,K_k/y_k]}$ if the rule for $X$ is $X\,y_1\,\dots\,y_k\to R$, 
		where $R[K_1/y_1,\allowbreak\dots,K_k/y_k]$ denotes the term obtained from $R$ by substituting $K_i$ for $y_i$ for all $i\in[k]$,
	\item	$\bullet\symb[K_1][\dots][K_k]\rew_\Gg\set{K_1,\dots,K_k}$, and
	\item	$\oplus\symb[K_1][\dots][K_k]\rew_\Gg\set{K_i}$ for every $i\in[k]$.
	\end{bracketenumerate}
	
	We say that a term $M$ is \emph{$\Gg$-convergent} if $M\rew_\Gg\Nn$ for some set $\Nn$ of $\Gg$-convergent terms.
	This is an inductive definition; in particular, the base case is when $M\rew_\Gg\emptyset$.
	In other words, M is \emph{$\Gg$-convergent} if there is a finite tree labeled by terms where for each node, the node and its children satisfy one of (1)-(3).
	Moreover, the grammar $\Gg$ is \emph{convergent} if its starting nonterminal $X_0$ is $\Gg$-convergent.

\section{Transformation}\label{sec:transformation}

	In this section we present a transformation, called \emph{order-reducing transformation}, resulting in the main theorem of this paper:
	
	\begin{theorem}\label{thm:main}
		For any $n\geq 1$, there exists a transformation from order-$n$ grammars to order-$(n-1)$ grammars,
		and a polynomial $p_n$ such that, for any order-$n$ grammar $\Gg$, the resulting grammar $\Gg^\dag$ is convergent if and only if $\Gg$ is convergent, and $|\Gg^\dag|\leq 2^{p_n(|\Gg|)}$.
	\end{theorem}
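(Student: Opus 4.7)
The plan is to lower the order by one by eliminating, from every nonterminal, the arguments of order exactly $n-1$, replacing them by finite families of lower-order \emph{refinements}. Since the set of refinements at each type is finite (but in general exponential in the size of the type), the overall size of the new grammar grows at most exponentially, matching the stated bound.

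Concretely, I would first define, for each type $\alpha$ with $\ord(\alpha)\leq n-1$, a finite set of \emph{profiles} $\Pi(\alpha)$, in the intersection-type style: $\Pi(\otyp)=\{\star\}$, and a profile of $\alpha\arr\beta$ is a pair $(S,\pi)$ where $S$ is a finite set of profiles of $\alpha$ and $\pi$ is a profile of $\beta$. The intended reading is ``the result has profile $\pi$, and the argument will be used according to each profile in $S$''. Simultaneously, I would assign to each pair $(\alpha,\pi)$ a translated type $\alpha^\pi$, designed so that $\ord(\alpha^\pi)<\ord(\alpha)$ whenever $\ord(\alpha)=n-1$; this is the formal reason why the overall order drops by one.

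Next, I would translate every term $K$ of type $\alpha$ and every profile $\pi\in\Pi(\alpha)$ into a term $K^\pi$ of type $\alpha^\pi$. Variables and nonterminals are translated by selecting the copy indexed by $\pi$: a variable $y$ becomes $y^\pi$, and a nonterminal $X$ becomes $X^\pi$. Applications are translated by distributing over profiles, using $\bullet$ to conjoin the several required uses of a single argument (so that all of them have to converge) and $\oplus$ to choose the appropriate refinement at each use site. For each original rule $X\,y_1\,\dots\,y_k\arr R$ and each profile $\pi$ of $\tp(X)$, the transformed grammar $\Gg^\dag$ contains a rule for $X^\pi$ whose body is the translation of $R$ under the profile choices dictated by $\pi$.

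Correctness will be established in both directions by induction on the inductive definition of convergence. In the forward direction, a convergence derivation of $\Gg$ starting at $X_0$ yields a derivation of $\Gg^\dag$ starting at $X_0^\star$, by reading off at each reduction step the actual profiles with which the higher-order arguments happen to be consumed, and selecting the corresponding $\oplus$-alternatives. In the backward direction, a derivation in $\Gg^\dag$ is converted to one in $\Gg$ by forgetting the profile annotations. The main obstacle I foresee is the substitution lemma: a single substitution $R[K/y]$ in $\Gg$ corresponds, in $\Gg^\dag$, to a simultaneous substitution of the various copies $K^{\pi_i}$ for the matching $y^{\pi_i}$, and one has to verify that the distribution of profile choices over the subterms of $R$ is coherent with those used to translate $K$. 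The size bound then follows from a direct combinatorial count, since each $\Pi(\alpha)$ has cardinality at most exponential in $|\alpha|$ and each original rule gives rise to at most one new rule per profile of its type.
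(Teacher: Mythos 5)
Your profile-based plan has the right flavour at the very bottom level: for $\alpha=\otyp^\ell\arr\otyp$ your profiles $(S,\star)$ with $S\subseteq\Pi(\otyp)=\{\star\}$ are exactly the ``argument may be used / is not used'' variants, and your use of $\oplus$ to choose a refinement and $\bullet$ to conjoin the required uses is exactly how the paper translates an application $K\,L$ with $K$ of order $1$, namely as $\oplus\symb[K_0][\bullet\symb[K_1][L]]$. However, as stated the proposal has genuine gaps. First, the size bound fails for the recursive profiles you define: $|\Pi(\alpha\arr\beta)|=2^{|\Pi(\alpha)|}\cdot|\Pi(\beta)|$ is a tower of exponentials of height $\ord(\alpha)$, so already for an order-$2$ type $(\otyp^m\arr\otyp)\arr\otyp$ you get $2^{2^m}$ profiles, which is not bounded by $2^{p_2(|\Gg|)}$ for a single transformation step. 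The paper refines a type only by which of its \emph{trailing ground} arguments may be used ($2^{\gar(\alpha)}\le 2^{A_\Gg}$ variants), never by nested sets of profiles, and this is what keeps one step singly exponential. Second, you never define $\alpha^\pi$, and the assertion $\ord(\alpha^\pi)<\ord(\alpha)$ is precisely the point that needs an idea; moreover you are aiming at the wrong end. A genuinely higher-order argument cannot be replaced by finitely many lower-order terms, so ``eliminating the arguments of order exactly $n-1$'' does not go through. What can be eliminated are the \emph{order-$0$} arguments, whose occurrences are replaced by the markers $\bullet$ (usable) or $\Omega$ (unusable); removing all trailing ground arguments at every level of every type is what makes every order drop by one simultaneously.

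Third, the correctness sketch glosses over the real difficulty. In $\Gg$ a ground argument $L$ is substituted into the rule body and consumed arbitrarily late in the derivation, whereas in $\Gg^\dag$ its translation sits next to a $\bullet$ at the original application site. To match the two derivations one must reorder reductions so that substitutions of ground arguments are postponed; the paper introduces extended terms with explicit substitutions $\esubst{E}{L}{z}$ exactly for this purpose, and proves separately that this reordering preserves convergence. Also, the backward direction is not a matter of ``forgetting the profile annotations'': the $\oplus$-choices resolved in a $\Gg^\dag$-derivation are commitments about which arguments get used, and turning them back into a $\Gg$-derivation requires an induction on the number of reduction steps (and a monotonicity lemma saying that marking a variable as usable can only help). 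Without these ingredients the substitution lemma you flag as ``the main obstacle'' cannot be closed.
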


\subparagraph{Intuitions.}

	Let us first present intuitions behind our transformation.
	While reducing the order, we have to replace, in particular, order-$1$ functions by order-$0$ terms.
	Consider for example a term $K\,L$ of type $\otyp$, where $K$ has type $\otyp\arr\otyp$.
	Notice that $L$ generates trees that are inserted somewhere in contexts generated by $K$.
	Thus, when is $K\,L$ convergent?
	There are two possibilities.
	First, maybe $K$ is convergent without using its argument at all.
	Second, maybe $K$ can be convergent but only using its argument, and then $L$ also has to be convergent.
	Notice that in the first case $K\,\Omega$ is convergent (i.e., $K$ is convergent even if the argument is not convergent),
	and in the second case $K\,\bullet$ is convergent (i.e., $K$ is convergent if its argument is convergent).
	In the transformation, we transform $K$ into two order-$0$ terms, $K_0$ and $K_1$ corresponding to $K\,\Omega$ and $K\,\bullet$,
	and then we replace $K\,L$ by $\oplus\symb[K_0][\bullet\symb[K_1][L]]$.

	As a full example, consider an order-$1$ grammar with the following rules:
	\begin{align*}
		\varX\to \varY\,\varZ,&&
		\varY\,\varx\to\oplus\symb[\bullet][\varx],&&
		\varZ\to\bullet.
	\end{align*}
	It will be transformed to the order-$0$ grammar with the following rules:
	\begin{align*}
		\varX\to\oplus\symb[\varY_0][\bullet\symb[\varY_1][\varZ]],&&
		\varY_0\to\oplus\symb[\bullet][\Omega],&&
		\varY_1\to\oplus\symb[\bullet][\bullet],&&
		\varZ\to\bullet.
	\end{align*}
	Notice that the original grammar is convergent ``for two reasons'': the $\oplus$ node in the rule for $\varY$ may reduce either to the first possibility (i.e., to $\bullet$),
	or to the second possibility (i.e., to $\varx$), in which case convergence follows from convergence of the argument $\varZ$.
	This is reflected by the two possibilities available for the $\oplus$ node in the new rule for $\varX$:
	we either choose the first possibility and we depend only on convergence of $\varY_0$,
	or we choose the the second possibility and we depend on convergence of both $\varY_1$ and $\varZ$.
	Notice that after replacing the (old and new) rule for $\varZ$ by $\varZ\to\Omega$, the modified grammars remain convergent thanks to the first possibility above.
	Likewise, after replacing the original rule for $\varY$ by $\varY\,\varx\to\varx$, the new rules will be $\varY_0\to\Omega$ and $\varY_1\to\bullet$,
	and the modified grammars remain convergent thanks to the second possibility above.
	However, after applying both these replacements simultaneously, the grammars stop to be convergent.
	
	If our term $K$ takes multiple order-$0$ arguments, say we have $K\,L_1\,\dots\,L_k$,
	while transforming $K$ we need $2^k$ variants of the term: each of the arguments may be either used (replaced by $\bullet$) or not used (replaced by $\Omega$).
	This is why we have the exponential blow-up.
	Let us compare this quickly with the transformation of Asada and Kobayashi~\cite{word2tree}, which worked for grammars generating words
	(i.e., trees where every node has at most one child).
	In their case, at most one of the arguments $L_i$ could be used, so they needed only $k+1$ variants of $K$; this is why their transformation was polynomial.
	
	For higher-order grammars we apply the same idea: functions of order $1$ are replaced by terms of order $0$,
	and then the order of any higher-order function drops down by one.
	For example, consider a grammar with the following rules:
	\begin{align*}
		\varX\to\varT\,\varY,&&
		\varT\,\vary\to\vary\,(\vary\,\bullet),&&
		\varY\,\varx\to\oplus\symb[\bullet][\varx].
	\end{align*}
	The nonterminal $\varY$ is again of type $\otyp\arr\otyp$, hence it is replaced by two nonterminals $\varY_0,\varY_1$ of type $\otyp$,
	describing the situation when the parameter $\varx$ is either not used or used.
	Likewise, the corresponding parameter $\vary$ of $\varT$ is replaced by two parameters $\vary_0,\vary_1$.
	The resulting grammar will have the following rules:
	\begin{align*}
		\varX\to\varT\,\varY_0\,\varY_1,&&
		\varT\,\vary_0\,\vary_1\to\oplus\symb[\vary_0][\bullet\symb[\vary_1][\oplus\symb[\vary_0][\bullet\symb[\vary_1][\bullet]]]],&&
		\varY_0\to\oplus\symb[\bullet][\Omega],&&
		\varY_1\to\oplus\symb[\bullet][\bullet].
	\end{align*}

\subparagraph{Formal definition.}	

	We now formalize the above intuitions.
	Having a type, we are interested in cutting off its suffix being of order $1$.
	Thus, we use the notation $\alpha_1\arr\dots\arr\alpha_k\Arr\otyp^\ell\arr\otyp$ for a type $\alpha_1\arr\dots\arr\alpha_k\arr\otyp^\ell\arr\otyp$ such that either $k=0$ or $\alpha_k\neq\otyp$.
	\pagebreak[3]
	Notice that every type $\alpha$ can be uniquely represented in this form.
	We remark that some among the types $\alpha_1,\dots,\alpha_{k-1}$ (but not $\alpha_k$) may be $\otyp$.
	For a type $\alpha$ we write $\gar(\alpha)$ (``ground arity'') for the number $\ell$ for which we can write $\alpha=(\alpha_1\arr\dots\arr\alpha_k\Arr\otyp^\ell\arr\otyp)$;
	we also extend this to terms: $\gar(M)=\gar(\tp(M))$.
	
	We transform terms of type $\alpha$ to terms of type $\alpha^\dag$, which is defined by induction:
	\begin{align*}
		(\alpha_1\arr\dots\arr\alpha_k\Arr\otyp^\ell\arr\otyp)^\dag = \left((\alpha_1^\dag)^{2^{\gar(\alpha_1)}}\arr\dots\arr(\alpha_k^\dag)^{2^{\gar(\alpha_k)}}\arr\otyp\right).
	\end{align*}
	Thus, we remove all trailing order-$0$ arguments, and we multiplicate (and recursively transform) remaining arguments.

	For a finite set $S$, we write $2^S$ for the set of functions $A\colon S\to\set{0,1}$.
	Moreover, we assume some fixed order on functions in $2^S$, 
	and we write $P\,(Q_A)_{A\in 2^S}$ for an application $P\,Q_{A_1}\,\dots\,Q_{A_{2^{|S|}}}$, where $A_1,\dots,A_{2^{|S|}}$ are all the function from $2^S$ listed in the fixed order.
	The only function in $2^\emptyset$ is denoted $\emptyset$.
	
	Fix a grammar $\Gg=(\Xx,X_0,\Rr)$.
	For every nonterminal $X$ and for every function $A\in2^{[\gar(X)]}$ we consider a nonterminal $X_A^\dag$ of type $(\tp(X))^\dag$.
	As the new set of nonterminals we take $\Xx^\dag=\setof{X_A^\dag}{X\in\Xx,A\in2^{[\gar(X)]}}$.
	Likewise, for every variable $y$ and for every function $A\in2^{[\gar(y)]}$ we consider a variable $y_A^\dag$ of type $(\tp(y))^\dag$,
	and for a set of variables $\Yy$ we denote $\Yy^\dag=\setof{y_A^\dag}{y\in\Yy,A\in2^{[\gar(y)]}}$.

	We now define a function $\tr$ transforming terms.
	Its value $\tr(A,Z,M)$ is defined when $M$ is a term over some $(\Xx,\Yy)$, and $A\in2^{[\gar(M)]}$,
	and $Z\colon\Yy\rightharpoonup\set{0,1}$ is a partial function such that $\dom(Z)$ contains only variables of type $\otyp$.
	The intention is that $A$ specifies which among trailing order-$0$ arguments can be used, and $Z$ specifies which order-$0$ variables (among those in $\dom(Z)$) can be used.
	The transformation is defined by induction on the structure of $M$, as follows:
	\begin{bracketenumerate}
	\item	$\tr(A,Z,X)=X_A$ for $X\in\Xx$;
	\item	$\tr(A,Z,y)=y_A$ for $y\in\Yy\setminus\dom(Z)$;
	\item	$\tr(A,Z,z)=\Omega$ if $Z(z)=0$;
	\item	$\tr(A,Z,z)=\bullet$ if $Z(z)=1$;
	\item	$\tr(\emptyset,Z,\bullet\symb[K_1][\dots][K_k])=\bullet\symb[\tr(\emptyset,Z,K_1)][\dots][\tr(\emptyset,Z,K_k)]$;
	\item	$\tr(\emptyset,Z,\oplus\symb[K_1][\dots][K_k])=\oplus\symb[\tr(\emptyset,Z,K_1)][\dots][\tr(\emptyset,Z,K_k)]$;
	\item	$\tr(A,Z,K\,L)=\oplus\symb[\tr(A[\ell+1\mapsto0],Z,K)][\bullet\symb[\tr(A[\ell+1\mapsto1],Z,K)][\tr(\emptyset,Z,L)]]$ if $\tp(K)=(\otyp^{\ell+1}\arr\otyp)$;
	\item	$\tr(A,Z,K\,L)=(\tr(A,Z,K))\,(\tr(B,Z,L))_{B\in2^{[\gar(L)]}}$ if $\tp(K)=(\alpha_1\arr\dots\arr\alpha_k\Arr\otyp^\ell\arr\otyp)$ with $k\geq 1$.
	\end{bracketenumerate}

	For every rule $X\,y_1\,\dots\,y_k\,z_1\,\dots\,z_\ell\to R$ in $\Rr$, where $\ell=\gar(X)$, 
	and for every function $A\in2^{[\ell]}$, 
	to $\Rr^\dag$ we take the rule
	\begin{align*}
		X_A^\dag\,(y_{1,B}^\dag)_{B\in2^{[\gar(y_1)]}}\,\dots\,(y_{k,B}^\dag)_{B\in2^{[\gar(y_k)]}}\to\tr(\emptyset,[z_i\mapsto A(\ell+1-i)\mid i\in[\ell]],R).
	\end{align*}
	In the function $A$ it is more convenient to count arguments from right to left (then we do not need to shift the domain in Case (7) above),
	but it is more natural to have variables $z_1,\dots,z_\ell$ numbered from left to right;
	this is why in the rule for $X_A^\dag$ we assign to $z_i$ the value $A(\ell+1-i)$, not $A(i)$.
	
	Finally, the resulting grammar $\Gg^\dag$ is $(\Xx^\dag,X_{0,\emptyset}^\dag,\Rr^\dag)$.

\section{Complexity}

	In this section we analyze complexity of our transformation.
	First, we formally define the \emph{size} of a grammar.
	The size of a term is defined by induction on its structure:
	\begin{gather*}
		|X|=|y|=1,\qquad
		|K\,L|=1+|K|+|L|,\\
		|{\bullet}\symb[K_1][\dots][K_k]|=|{\oplus}\symb[K_1][\dots][K_k]|=1+|K_1|+\dots+|K_k|.
	\end{gather*}
	Then $|\Gg|$, the size of $\Gg$ is defined as the sum of $|R|+k$ over all rules $X\,y_1\,\dots\,y_k\to R$ of $\Gg$.
	In Asada and Kobayashi~\cite{word2tree} such a size is called \emph{Curry-style} size; 
	it does not include sizes of types of employed variables.

	We say that a type $\alpha$ is a \emph{subtype} of a type $\beta$ if either $\alpha=\beta$,
	or $\beta=(\beta_1\arr\beta_2)$ and $\alpha$ is a subtype of $\beta_1$ or of $\beta_2$.
	We write $A_\Gg$ for the largest arity of subtypes of types of nonterminals in a grammar $\Gg$.
	Notice that types of other objects appearing in $\Gg$, namely variables and subterms of right sides of rules, are subtypes of types of nonterminals,
	hence their arity is also bounded by $A_\Gg$.
	It is reasonable to consider large grammars, consisting of many rules, where simultaneously the maximal arity $A_\Gg$ is respectively small.

	While the exponential bound mentioned in \cref{thm:main} is obtained by applying the order-reducing transformation to an arbitrary grammar,
	the complexity becomes slightly better if we first apply a preprocessing step.
	This is in particular necessary, if we want to obtain linear dependence in the size of $\Gg$ (and exponential only in the maximal arity $A_\Gg$).
	The preprocessing, making sure that the grammar is in a \emph{simple form} (defined below) amounts to splitting large rules into multiple smaller rules.
	A similar preprocessing is present already in prior work~\cite{Kobayashi-jacm,word2tree,diagonal},
	however our definition of a simple form is slightly more liberal,
	so that the order reduction applied to a grammar in a normal form gives again a grammar in a normal form.

	An \emph{application depth} of a term $R$ is defined as the maximal number of applications on a single branch in $R$,
	where a compound application $K\,L_1\,\dots\,L_k$ counts only once.
	More formally, we define by induction:
	\begin{align*}
		&\mathsf{ad}(\bullet\symb[K_1][\dots][K_k])=\mathsf{ad}(\oplus\symb[K_1][\dots][K_k])=\max\set{\mathsf{ad}(K_i)\mid i\in[k]},\\
		&\mathsf{ad}(X\,K_1\,\dots\,K_k)=\mathsf{ad}(y\,K_1\,\dots\,K_k)=\max(\set{0}\cup\set{\mathsf{ad}(K_i)+1\mid i\in[k]}).
	\end{align*}
	We say that a grammar $\Gg$ is in a \emph{simple form} if the right side of each its rule has application depth at most $2$.
	
	Any grammar $\Gg$ can be converted to a grammar in a simple form, as follows.
	Consider a rule $X\,y_1\,\dots\,y_k\to R$, and a subterm of $R$ of the form $f\,K_1\,\dots\,K_m$, where $f$ is a nonterminal or a variable,
	but some $K_i$ already has application depth $2$.
	Then we replace the occurrence of $K_i$ with $Y\,y_1\,\dots\,y_k$ (being a term of application depth $1$) for a fresh nonterminal $Y$,
	and we add the rule $Y\,y_1\,\dots\,y_k\,x_1\,\dots\,x_s\to K_i\,x_1\,\dots\,x_s$ (whose right side already has application depth $2$; 
	the additional variables $x_1,\dots,x_s$ are added to ensure that the type is $\otyp$).
	By repeating such a replacement for every ``bad'' subterm of every rule, we clearly obtain a grammar in a simple form.
	
	\begin{lemma}\label{simpl-complexity}
		Let $\Gg'$ be the grammar in a simple form obtained by the above simplification procedure from a grammar $\Gg$.
		Then $\ord(\Gg')=\ord(\Gg)$, and $A_{\Gg'}\leq 2A_\Gg$, and $|\Gg'|=\Oo(A_\Gg\cdot|\Gg|)$.
		The procedure can be performed in time linear in its output size.
	\end{lemma}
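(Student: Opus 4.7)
The plan is to examine how a single hoisting step affects each of the four quantities of interest, and then to aggregate over all hoists. I would first observe that hoisting strictly decreases a well-founded measure (for instance, the number of argument subterms of application depth $\geq 2$), so the procedure terminates and its output is in simple form by construction.

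For the order, the original nonterminals keep their types, so $\ord(\Gg')\geq\ord(\Gg)$. For each new nonterminal $Y$ introduced while hoisting a subterm $K_i$ from a rule $X\,y_1\,\dots\,y_k\to R$, we have $\tp(Y)=\tp(y_1)\arr\dots\arr\tp(y_k)\arr\tp(K_i)$, where $\tp(K_i)=\alpha_1\arr\dots\arr\alpha_s\arr\otyp$. Each $\ord(y_i)+1\leq\ord(X)\leq\ord(\Gg)$, and each $\ord(\alpha_j)+1\leq\ord(K_i)\leq\ord(\Gg)$, using the fact noted in the text that types of subterms of right sides of rules are subtypes of types of nonterminals. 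Hence $\ord(Y)\leq\ord(\Gg)$, so $\ord(\Gg')=\ord(\Gg)$. For the arity bound, $\tp(Y)$ has arity $k+s\leq 2A_\Gg$, since $k\leq A_\Gg$ (the arity of $\tp(X)$) and $s\leq A_\Gg$ (the arity of $\tp(K_i)$, itself a subtype of some nonterminal type); and proper subtypes of $\tp(Y)$ are subtypes of either some $\tp(y_i)$ or some $\alpha_j$, hence of existing nonterminal types, so have arity at most $A_\Gg\leq 2A_\Gg$.

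For the size, each hoist replaces $K_i$ in the original rule with the term $Y\,y_1\,\dots\,y_k$ of size $2k+1$, introduces $s$ fresh argument variables in the new rule's right side, and adds $k+s$ new parameters; crucially the hoisted $K_i$ itself is moved rather than duplicated. Hence each hoist contributes $O(k+s)=O(A_\Gg)$ extra size. The number of hoists needed to process a rule with right side $R$ is bounded by the number of compound applications in $R$, which is $O(|R|)$, so summing over all rules yields $|\Gg'|=|\Gg|+O(A_\Gg\cdot|\Gg|)=\Oo(A_\Gg\cdot|\Gg|)$. The same bookkeeping bounds the running time: identifying and performing each hoist takes $O(A_\Gg)$ work, so the total time is $O(A_\Gg\cdot|\Gg|)=O(|\Gg'|)$. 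I expect the main subtle point to be the size accounting, where one must carefully distinguish genuinely newly allocated nodes (the $Y\,y_1\,\dots\,y_k$ replacement, the $x_1,\dots,x_s$ arguments, and the new parameter list) from nodes that are merely relocated (the hoisted $K_i$), so as not to double-count.
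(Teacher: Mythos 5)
Your proof is correct and follows essentially the same route as the paper's: the arity bound comes from $k+s\leq 2A_\Gg$ with $k$ bounded by the arity of $\tp(X)$ and $s$ by the arity of $\tp(K_i)$, and the size bound comes from counting at most one $\Oo(A_\Gg)$-cost replacement per subterm of a right-hand side. You merely spell out the order and running-time parts, which the paper dismisses as obvious, and add an explicit termination measure.
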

	
	\begin{proof}
		The parts about the order and about the running time are obvious.
		
		Types of nonterminals originating from $\Gg$ remain unchanged.
		The type of a fresh nonterminal $Y$ introduced in the procedure is of the form $\alpha_1\arr\dots\arr\alpha_k\arr\beta_1\arr\dots\arr\beta_s\arr\otyp$,
		where all $\alpha_i$ and $\beta_i$ are types present also in $\Gg$.
		The arity of the whole type is $k+s$, where $k$ is the arity of the original nonterminal $X$ (hence it is bounded by $A_\Gg$),
		and $s$ is bounded by the arity of the type of $K_i$ (hence also by $A_\Gg)$.
		
		In order to bound the size of the resulting grammar, 
		notice that the considered replacement is performed at most once for every subterm of the right side of every rule, hence the number of replacements is bounded by $|\Gg|$.
		Each such a replacement increases the size of the grammar by at most $\Oo(A_\Gg)$.
	\end{proof}
	
	\begin{lemma}\label{trans-complexity}
		For every grammar $\Gg$ in a simple form, the grammar $\Gg^\dag$ (i.e., the result of the order-reducing transformation)
		is also in a simple form, and $\ord(\Gg^\dag)=\max(0,\ord(\Gg)-1)$, and $A_{\Gg^\dag}\leq A_\Gg\cdot 2^{A_\Gg}$, 
		and $|\Gg^\dag|=\Oo(|\Gg|\cdot 2^{5\cdot A_\Gg})$.
		Moreover, the transformation can be performed in time linear in its output size.
	\end{lemma}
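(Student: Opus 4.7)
The proof addresses five claims---order, arity, simple form, size, and running time---of which the size bound is the main obstacle. The order and arity claims fall out from a single type-level induction establishing $\ord(\alpha^\dag)=\max(0,\ord(\alpha)-1)$ and $\mathrm{arity}(\alpha^\dag)=\sum_{i=1}^{k}2^{\gar(\alpha_i)}$ for $\alpha=(\alpha_1\arr\dots\arr\alpha_k\Arr\otyp^\ell\arr\otyp)$. The order equation is immediate for $k=0$; for $k\geq 1$ one has $\ord(\alpha)\geq 2$ since $\alpha_k\neq\otyp$, and the inductive hypothesis $\ord(\alpha_i^\dag)+1=\max(1,\ord(\alpha_i))$, maximized over $i$, yields $\ord(\alpha)-1$. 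The arity equation, combined with $k\leq A_\Gg$ and $\gar(\alpha_i)\leq A_\Gg$ and the routine observation that subtypes of $\alpha^\dag$ are themselves transformed subtypes, gives $A_{\Gg^\dag}\leq A_\Gg\cdot 2^{A_\Gg}$. The running time is immediate: implementing $\tr$ according to its defining cases outputs each token in amortized constant work.

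For simple-form preservation I would analyze $\tr(\emptyset,Z,R)$ when $R$ has application depth at most $2$. Cases (5)--(6) preserve the skeleton. Case (7), firing on the trailing ground arguments of a compound application $f\,K_1\,\dots\,K_k\,z_1\,\dots\,z_\ell$, emits a binary $\oplus$/$\bullet$ tree that introduces no applications; case (8), firing on the non-ground prefix, uncurries the application into a single flat compound application of the transformed head to the $\sum_{i=1}^{k}2^{\gar(K_i)}$ transformed arguments $\tr(B,Z,K_i)$. A straightforward induction on the depth $d$ then shows that a flat application of depth at most $d$ is transformed into a term of application depth at most $d$, so $\tr(\emptyset,Z,R)$ has application depth at most $2$.

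The size bound is the main obstacle. A naive pointwise induction $|\tr(A,Z,M)|\leq|M|\cdot 2^{cA_\Gg}$ fails in case (7) because $\tr(\cdot,Z,K)$ is duplicated with no decrease in $|K|$. My plan is to bypass the pointwise induction and decompose $R$ into constructors, atoms, and its maximal flat applications, bounding each flat application in closed form: unfolding case (7) across the $\ell$ ground suffixes and case (8) across the $k$ non-ground prefixes of $f\,K_1\,\dots\,K_k\,z_1\,\dots\,z_\ell$ produces a transform of size about $2^\ell(1+\sum_{i=1}^{k}2^{\gar(K_i)}(1+|\tr(K_i)|))$. Since simple form forces $k+\ell\leq A_\Gg$ and restricts each $K_i$ to be either atomic or a flat depth-$1$ application over atoms, the same recipe applied to $K_i$ yields $|\tr(K_i)|=\Oo(|K_i|\cdot 2^{2A_\Gg})$, and substituting back produces $\Oo(|f\,K_1\,\dots\,K_k\,z_1\,\dots\,z_\ell|\cdot 2^{4A_\Gg})$ per flat application. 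Summing over the flat applications inside $R$ bounds $|\tr(\emptyset,Z,R)|$ by $\Oo(|R|\cdot 2^{4A_\Gg})$; multiplying by the $2^\ell\leq 2^{A_\Gg}$ new rules per original rule (each with a left side of size $\Oo(A_\Gg\cdot 2^{A_\Gg})$, absorbed by the right-side bound) yields the stated $|\Gg^\dag|=\Oo(|\Gg|\cdot 2^{5A_\Gg})$.
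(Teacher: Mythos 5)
Your proposal is correct and follows essentially the same route as the paper: the same type-level induction for order and arity, the same observation that application depth is preserved, and the same size accounting (a factor $2^{A_\Gg}$ for duplicating non-ground arguments and a factor $2^{A_\Gg}\geq 2^\ell$ for the $\ell$-fold duplication caused by trailing ground arguments at each of the two application-depth levels, times $2^{A_\Gg}$ rule copies, giving $2^{5A_\Gg}$); the paper merely phrases your explicit depth-$2$ unfolding as an induction on application depth with multiplier $\Oo(2^{2d\cdot A_\Gg})$. The only cosmetic slip is writing the trailing ground arguments as variables $z_i$ rather than arbitrary order-$0$ terms $L_i$, which does not affect the bound since these arguments are transformed once each and contribute additively.
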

	
	\begin{proof}
		The part about the running time is obvious.
		It is also easy to see by induction that $\ord(\alpha^\dag)=\max(0,\ord(\alpha)-1)$.
		It follows that the order of the grammar satisfies the same equality,
		because nonterminals of $\Gg^\dag$ have type $\alpha^\dag$ for $\alpha$ being the type of a corresponding nonterminal of $\Gg$.
		
		Recall that in the type $\alpha^\dag$ obtained from $\alpha=(\alpha_1\arr\dots\arr\alpha_k\arr\otyp)$,
		every $\alpha_i$ either disappears or becomes (transformed and) repeated $2^{\gar(\alpha_i)}$ times, that is, at most $2^{A_\Gg}$ times.
		This implies the inequality concerning $A_{\Gg^\dag}$.
		
		Every compound application can be written as $f\,K_1\,\dots\,K_k\,L_1\,\dots\,L_\ell$, where $f$ is a nonterminal or a variable, and $\ell=\gar(f)$.
		In such a term, every $K_i$ (after transforming) becomes repeated $2^{\gar(K_i)}$ times, that is, at most $2^{A_\Gg}$ times.
		Then, for every $L_i$ we duplicate the outcome and we append a small prefix;
		this duplication happens $\ell$ times, that is, at most $A_\Gg$ times.
		In consequence, we easily see by induction that while transforming a term of application depth $d$, its size gets multiplicated by at most $O(2^{2d\cdot A_\Gg})$.
		Moreover, every nonterminal $X$ is repeated $2^{\gar(X)}$ times, that is, at most $2^{A_\Gg}$ times.
		Because the application depth of right sides of rules is at most $2$, this bounds the size of the new grammar by $\Oo(|\Gg|\cdot 2^{5\cdot A_\Gg})$.
		
		Looking again at the above description of the transformation, we can notice that the application depth cannot grow;
		in consequence the property of being in a simple form is preserved.
	\end{proof}
	
	Thus, if we want to check nonemptiness of a grammar $\Gg$ of order $n$,
	we can first convert it to a simple form, and then apply the order-reducing transformation $n$ times.
	This gives us a grammar of order $0$, whose nonemptiness can be checked in linear time.
	By \cref{simpl-complexity,trans-complexity}, the whole algorithm works in time $n$-fold exponential in $A_\Gg$ and linear in $|\Gg|$.
	
	If the original grammar $\Gg$ generates a language of words, we can start by applying the polynomial-time transformation of Asada and Kobayashi~\cite{word2tree},
	which converts $\Gg$ into an equivalent grammar of order $n-1$ (generating a language of trees); then we can continue as above.
	Because their transformation is also linear in $|\Gg|$, and increases the arity only quadratically,
	in this case we obtain an algorithm working in time $(n-1)$-fold exponential in $A_\Gg$ and linear in $|\Gg|$.

\section{Correctness}\label{sec:correctness}

	In this section we finish a proof of \cref{thm:main} by showing that the grammar $\Gg^\dag$ resulting from transforming a grammar $\Gg$ is convergent if and only if
	the original grammar $\Gg$ is convergent.
	This proof is also formalised in the proof assistant Coq, and available at GitHub (\href{https://github.com/pparys/ho-transform-sbs}{https://github.com/pparys/ho-transform-sbs}).
	The strategy of our proof is similar as in Asada and Kobayashi~\cite{word2tree}.
	Namely, we first show that reductions performed by $\Gg$ can be reordered, so that we can postpone substituting for (trailing) variables of order $0$.
	To store such postponed substitutions, called \emph{explicit substitutions}, we introduce \emph{extended terms}.
	Then, we show that such reordered reductions in $\Gg$ are in a direct correspondence with reductions in $\Gg^\dag$.\footnote{%
		Asada and Kobayashi have an additional step in their proof, namely a reduction to the case of recursion-free grammars.
		This step turns out to be redundant, at least in the case of our transformation.}

\subparagraph{Extended terms.}

	In the sequel, terms defined previously are sometimes called non-extended terms, in order to distinguish them from extended terms defined below.
	Having a finite set of typed nonterminals $\Xx$, and a finite set $\Zz$ of variables of type $\otyp$, 
	\emph{extended terms} over $(\Xx,\Zz)$ are defined by induction:
	\begin{itemize}
	\item	if $z\not\in\Zz$ is a variable of type $\otyp$, and $E$ is an extended term over $(\Xx,\Zz\uplus\set{z})$, and $L$ is a non-extended term of type $\otyp$ over $(\Xx,\Zz)$,
		then $\esubst{E}{L}{z}$ is an extended term over $(\Xx,\Zz)$;
	\item	every non-extended term of type $\otyp$ over $(\Xx,\Zz)$ is an extended term over $(\Xx,\Zz)$.
	\end{itemize}
	The construction of the form $\esubst{E}{L}{z}$ is called an \emph{explicit substitution}.
	Intuitively, it denotes the term obtained by substituting $L$ for $z$ in $E$.
	Notice that the variable $z$ being free in $E$ becomes bound in $\esubst{E}{L}{z}$,
	and that explicit substitutions are allowed only for the ground type $\otyp$.
	
	Of course a (non-extended or extended) term over $(\Xx,\Zz)$ can be also seen as a term over $(\Xx,\Zz')$, where $\Zz'\supseteq\Zz$.
	In the sequel, such extending of the set of variables is often performed implicitly.
	
	Having a grammar $\Gg=(\Xx,X_0,\Rr)$, for every set $\Zz$ of variables of type $\otyp$
	we define an \emph{ext-reduction} relation $\erew_\Gg$ between extended terms over $(\Xx,\Zz)$ and sets of such terms, 
	as the least relation such that
	\begin{bracketenumerate}
	\item	$X\,K_1\,\dots\,K_k\,L_1\,\dots\,L_\ell\erew_\Gg\set{\esubstdots{R[K_1/y_1,\dots,K_k/y_k,z_1'/z_1,\dots,z_\ell'/z_\ell]}{L_1}{z_1'}{L_\ell}{z_\ell'}}$
		if $\ell=\gar(X)$, and $\Rr(X)=(X\,y_1\,\dots\,y_k\,z_1\,\dots\,z_\ell\to R)$, and $z_1',\dots,z_\ell'$ are fresh variables of type $\otyp$ not appearing in $\Zz$,
	\item	$\bullet\symb[K_1][\dots][K_k]\erew_\Gg\set{K_1,\dots,K_k}$,
	\item	$\oplus\symb[K_1][\dots][K_k]\erew_\Gg\set{K_i}$ for every $i\in[k]$, 
	\item	$\esubst{z}{L}{z}\erew_\Gg\set{L}$,
	\item	$\esubst{z'}{L}{z}\erew_\Gg\set{z'}$ if $z'\neq z$, and
	\item	$\esubst{E}{L}{z}\erew_\Gg\setof{\esubst{F}{L}{z}}{F\in\Ff}$ whenever $E\erew_\Gg\Ff$.
	\end{bracketenumerate}
	
	We say that an extended term $E$ over $(\Xx,\emptyset)$ is \emph{$\Gg$-ext-convergent} if $E\rew_\Gg\Ff$ for some set $\Ff$ of $\Gg$-ext-convergent extended terms.
	The grammar $\Gg$ is \emph{ext-convergent} if its starting nonterminal $X_0$ is $\Gg$-ext-convergent.

	There is an ``expand'' function from extended terms to non-extended terms, which performs all the explicit substitutions written in front of an extended term:
	\begin{align*}
		\exp(\esubstdots{K}{L_1}{z_1}{L_\ell}{z_\ell})=K[L_1/z_1]\dots[L_\ell/z_\ell].
	\end{align*}
	We also write $\exp(\Ff)$ for $\setof{\exp(F)}{F\in\Ff}$ (where $\Ff$ is a set of extended terms).
	The following \lcnamecref{g2eg}, saying that we can consider ext-convergence instead of convergence, can be proved in a standard way
	(actually, Asada and Kobayashi have a very similar lemma~\cite[Lemma 18]{word2tree}); for completeness we attach a proof in \cref{app:std2ext}.

	\begin{lemma}\label{g2eg}
		Let $\Gg=(\Xx,X_0,\Rr)$ be a grammar.
		An extended term $E$ over $(\Xx,\emptyset)$ is $\Gg$-ext-convergent if and only if $\exp(E)$ is $\Gg$-convergent.
		In particular $\Gg$ is ext-convergent if and only if it is convergent.
	\end{lemma}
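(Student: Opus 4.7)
The plan is to prove both directions of the equivalence by induction on the appropriate convergence derivations, mediated by a simulation lemma relating each single $\erew_\Gg$-step on $E$ to at most one $\rew_\Gg$-step on $\exp(E)$.

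For the forward direction (ext-convergence implies convergence), I would proceed by induction on the $\Gg$-ext-convergence derivation of $E$. The key auxiliary fact is a simulation lemma: for every ext-reduction $E\erew_\Gg\Ff$, either (a) $\exp(E)\rew_\Gg\exp(\Ff)$ in the non-extended reduction, or (b) $\Ff=\set{F}$ with $\exp(E)=\exp(F)$. This simulation is proved by induction on the derivation of $\erew_\Gg$: cases (1)--(3) yield alternative (a) directly at the top level; cases (4) and (5) yield alternative (b), since substituting $L$ for $z$ in a single variable produces exactly the corresponding reduct; and case (6) propagates either alternative through an outer explicit substitution, using the fact that rule bodies have free variables only among the rule's formal parameters, so that a top-level $\rew_\Gg$-step commutes with a substitution for an outer variable~$z$. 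The forward direction then follows immediately: each ext-convergence step either matches a convergence step of $\exp(E)$ or leaves $\exp(E)$ unchanged, and in either case the inductive hypothesis closes the argument.

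For the reverse direction (convergence implies ext-convergence), I would use a lexicographic double induction: outer on the $\Gg$-convergence derivation of $\exp(E)$, inner on the telescope length of $E$. Write $E=\esubstdots{K}{L_1}{z_1}{L_m}{z_m}$ with $K$ a non-extended term of type~$\otyp$. If $K$ has the form of a fully applied nonterminal, or $\bullet\symb[K_1'][\dots][K_j']$, or $\oplus\symb[K_1'][\dots][K_j']$ (where the $K_i'$ are the immediate subterms of $K$), then the top-level $\rew_\Gg$-step of $\exp(E)$ originates from the analogous reduction at $K$; I mirror this by applying case (1), (2), or (3) to $K$ and wrapping in $m$ applications of case~(6), obtaining $E\erew_\Gg\Ff$ with each $\exp(F)$ lying in the reduct $\Nn$ of $\exp(E)$, so the outer IH applies. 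If $K$ is a bound variable~$z_i$, I instead apply case~(4) when $i=1$ or case~(5) when $i>1$ at the innermost level, wrapped in case~(6)'s, obtaining $E\erew_\Gg\set{E'}$ where $E'$ has telescope shorter by one and $\exp(E')=\exp(E)$, so the inner IH applies.

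The main obstacle is the substitution book-keeping for the nonterminal-application case of the reverse direction: case~(1) of $\erew_\Gg$ introduces fresh variables $z_i'$ that sit inside the pre-existing outer substitutions $\langle L_j/z_j\rangle$, and I must verify that iterated expansion of this nested telescope coincides with the non-extended reduct of $\exp(E)$. This amounts to a standard substitution-commutation lemma, exploiting that the $z_i'$ are fresh and that rule bodies have free variables confined to the rule's formal parameters (so the outer substitutions do not touch the rule body outside the slots already filled by expanded versions of its arguments). The base case is benign: $\exp(E)\rew_\Gg\emptyset$ forces $K=\bullet\symb{}$, whereupon $E\erew_\Gg\emptyset$ via case~(2) wrapped in case~(6)'s.
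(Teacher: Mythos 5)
Your proposal matches the paper's own proof essentially step for step: the same simulation lemma for the forward direction (every ext-reduction $E\erew_\Gg\Ff$ gives either $\exp(E)\rew_\Gg\exp(\Ff)$ or $\exp(\Ff)=\set{\exp(E)}$, proved by induction on the ext-reduction with a reduction-commutes-with-substitution auxiliary), and the same lexicographic induction for the reverse direction (outer on the convergence derivation of $\exp(E)$, inner on the number of explicit substitutions), with the identical case split on whether the innermost non-extended term is a variable. The fresh-variable substitution book-keeping you flag as the main obstacle is exactly what the paper's corresponding case handles, so the approaches coincide.
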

	
	We extend the transformation function to extended terms, by adding the following rule, where $\esubst{E}{L}{z}$ is an extended term over $(\Xx,\Zz)$, 
	and $Z\in 2^\Zz$ (the first argument is always $\emptyset$, because all extended terms are of type $\otyp$):
	\begin{bracketenumerate}
	\setcounter{enumi}{8}
	\item	$\tr(\emptyset,Z,\esubst{E}{L}{z}) = \oplus\symb[\tr(\emptyset,Z[z\mapsto0],E)][\bullet\symb[\tr(\emptyset,Z[z\mapsto1],E)][\tr(\emptyset,Z,L)]]$.
	\end{bracketenumerate}

\subparagraph{Between ext-convergence and convergence of $\Gg^\dag$.}

	Once we know that convergence and ext-convergence of $\Gg$ are equivalent (cf.~\cref{g2eg}), it remains to prove that ext-convergence of $\Gg$ is equivalent to convergence of $\Gg^\dag$,
	which is the subject of \cref{eg2trans}:

	\begin{lemma}\label{eg2trans}
		Let $\Gg=(\Xx,X_0,\Rr)$ be a grammar.
		An extended term $E$ over $(\Xx,\emptyset)$ is $\Gg$-ext-convergent if and only if $\tr(\emptyset,\emptyset,E)$ is $\Gg^\dag$-convergent.
		In particular $\Gg$ is ext-convergent if and only if $\Gg^\dag$ is convergent.
	\end{lemma}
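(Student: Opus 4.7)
The plan is to prove both directions by induction, connecting convergence derivations of $E$ in $\Gg$ (ext-convergence) to those of $\tr(\emptyset, \emptyset, E)$ in $\Gg^\dag$. The starting point is a substitution lemma describing how the transformation commutes with substitution: if $R$ is a non-extended term with free higher-order variables $y_1, \dots, y_k$ and free order-$0$ variables $z_1, \dots, z_\ell$, and the $K_j$ are closed terms of matching type, and $a_i \in \set{0, 1}$, then
\[\tr(A, [z_i\mapsto a_i], R[K_j/y_j]) = \tr(A, [z_i\mapsto a_i], R)[\tr(B, \emptyset, K_j)/y_{j,B}^\dag],\]
where the substitution on the right is over all $j$ and all $B \in 2^{[\gar(y_j)]}$. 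This is shown by a routine structural induction on $R$, exploiting that every variable $y_j$ appearing in $\tr(A, [z_i\mapsto a_i], R)$ has been replaced by some $y_{j,B}^\dag$, which in turn gets replaced by $\tr(B, \emptyset, K_j)$. The only slightly delicate point is the bookkeeping for order-$0$ variables in $\dom([z_i\mapsto a_i])$ (clauses (3), (4) of $\tr$), but this is handled straightforwardly because closed $K_j$ are insensitive to the second argument of $\tr$.

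For the forward direction, I would proceed by induction on the $\Gg$-ext-convergence derivation of $E$ and show for each ext-reduction step $E \erew_\Gg \Ff$ that $\tr(\emptyset, \emptyset, E)$ admits a finite sequence of $\Gg^\dag$-reductions reaching a set of terms whose $\Gg^\dag$-convergence follows from the inductive hypothesis applied to $\Ff$. The $\bullet$- and $\oplus$-cases translate mechanically via clauses (5), (6) of $\tr$. The explicit-substitution cases (4)--(6) of $\erew_\Gg$ use clause (9): the top-level $\oplus$ provides a choice between the ``variable unused'' and ``variable used'' branches of $\tr(\emptyset, Z, \esubst{E}{L}{z})$, and picking the branch that matches the ext-reduction step delivers the required term, with ext-convergence of $L$ translating into $\Gg^\dag$-convergence of $\tr(\emptyset, Z, L)$ under the bullet. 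The nonterminal case (1) is the technically heaviest: iterated use of clauses (7) and (8) unfolds $\tr(\emptyset, \emptyset, X\,K_1\,\dots\,K_k\,L_1\,\dots\,L_\ell)$ into a cascade of $\oplus$/$\bullet$ nodes enumerating functions $A^\circ \in 2^{[\ell]}$ that indicate which trailing $L_i$ are ``used''; selecting the branch for a specific $A^\circ$ produces $X_{A^\circ}^\dag\,(\tr(B, \emptyset, K_j))_{j,B}$, which reduces by the $\Gg^\dag$-rule to $\tr(\emptyset, [z_i\mapsto A^\circ(\ell+1-i)], R)$ with $y_{j,B}^\dag$ substituted by $\tr(B, \emptyset, K_j)$. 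The substitution lemma rewrites this as $\tr(\emptyset, [z_i\mapsto A^\circ(\ell+1-i)], R[K_j/y_j])$, which matches, via clause (9), the corresponding $\oplus$-branch of $\tr(\emptyset, \emptyset, F)$ for the ext-reduction result $F$; the inductive hypothesis on $F$ closes the case.

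For the backward direction I would prove a strengthened statement: for every extended term $M$ over $(\Xx, \dom(Z))$, if $\tr(\emptyset, Z, M)$ is $\Gg^\dag$-convergent, then the extended term obtained from $M$ by wrapping it in explicit substitutions $\esubst{\cdot}{\bullet}{z}$ for each $z$ with $Z(z) = 1$ and $\esubst{\cdot}{\Omega}{z}$ for each $z$ with $Z(z) = 0$ is $\Gg$-ext-convergent. The argument goes by induction on the $\Gg^\dag$-convergence derivation, together with a case analysis on the top-level shape of $\tr(\emptyset, Z, M)$, mirroring how that shape was produced by the clauses of $\tr$. The main obstacle I anticipate is precisely this direction: a single $\Gg^\dag$-reduction step often corresponds not to an ext-reduction of $M$ itself but to a ``bookkeeping'' step recording which arguments or variables will eventually be used, introduced by the $\oplus$/$\bullet$ nodes in clauses (7) and (9). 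Handling this requires carefully tracking which branch of each $\oplus$ was chosen and reconstructing from the chosen branches a coherent ext-reduction strategy for $M$; the explicit-substitution scaffolding is what makes this reconstruction possible, since it lets us defer commitments about which order-$0$ arguments to evaluate. Once the strengthened statement is established, specialising to $Z = \emptyset$ yields the lemma, and the ``in particular'' clause follows from $\tr(\emptyset, \emptyset, X_0) = X_{0,\emptyset}^\dag$.
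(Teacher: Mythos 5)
Your overall architecture---a substitution lemma for $\tr$, a forward simulation by induction on the ext-convergence derivation, and a converse simulation---matches the paper's, and your substitution lemma is essentially \cref{trans-subst-com} (except that you restrict the $K_j$ to be closed, whereas in the congruence case for explicit substitutions the higher-order arguments may contain free order-$0$ variables from $\Zz$, so the lemma must be stated with a general $Z$ threaded through). However, the forward direction has a genuine gap at the congruence rule $\esubst{E_0}{L}{z}\erew_\Gg\set{\esubst{E_1}{L}{z},\dots,\esubst{E_k}{L}{z}}$ (rule (6) of $\erew_\Gg$), which you fold into ``the explicit-substitution cases use clause (9)''. There, $\tr(\emptyset,Z,\esubst{E_i}{L}{z})=\oplus\symb[P_i^0][\bullet\symb[P_i^1][L']]$ with $P_i^j=\tr(\emptyset,Z[z\mapsto j],E_i)$, and the hypothesis only tells you that for each $i$ \emph{either} $P_i^0$ converges \emph{or} both $P_i^1$ and $L'$ converge---different $i$ may take different disjuncts, so there is no single ``branch that matches the ext-reduction step''. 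To invoke the induction hypothesis on $E_0\erew_\Gg\set{E_1,\dots,E_k}$ you need all successors convergent under the \emph{same} $Z[z\mapsto j]$, and reconciling the disagreement requires a monotonicity lemma (the paper's \cref{greater-reduces-more}: convergence of $\tr(\emptyset,Z[z\mapsto0],E)$ implies convergence of $\tr(\emptyset,Z[z\mapsto1],E)$, since the extra $\Omega$'s can never be reached in a converging reduction). This lemma is absent from your outline, and the case does not close without it.

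Your backward direction also diverges from the paper in a way that leaves real work unaccounted for. The paper avoids reasoning about ext-convergence of modified extended terms altogether: it counts reduction steps (``$\Gg^\dag$-convergent in $n$ steps''), proves that every non-variable $E$ with $\tr(\emptyset,Z,E)$ convergent in $n$ steps admits $E\erew_\Gg\Ff$ with every $\tr(\emptyset,Z,F)$ convergent in fewer steps (by induction on the number of explicit substitutions in $E$), and concludes by induction on $n$. Your plan of wrapping $M$ in $\esubst{\cdot}{\bullet}{z}$ and $\esubst{\cdot}{\Omega}{z}$ and inducting on the $\Gg^\dag$-derivation instead requires (i) a well-founded measure, since the successors of $\tr(\emptyset,Z,M)$ under $\rew_{\Gg^\dag}$ (for instance $\bullet\symb[P^1][L']$) are not themselves of the form $\tr(\emptyset,Z',M')$, and (ii) unstated substitution lemmas on the ext-convergence side---that ext-convergence of $\esubst{E}{\Omega}{z}$ implies that of $\esubst{E}{L}{z}$, and that ext-convergence of $\esubst{E}{\bullet}{z}$ and of $L$ implies that of $\esubst{E}{L}{z}$---which are plausible but nontrivial and carry most of the weight of this direction. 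As it stands the proposal correctly identifies where the difficulty lies but does not resolve it.
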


	The remaining part of this section is devoted to a proof of this \lcnamecref{eg2trans}.
	Fix a grammar $\Gg=(\Xx,X_0,\Rr)$.
	Of course the second part (concerning the grammars)
	follows from the first part (concerning an extended term) applied to the starting nonterminal $X_0$.
	It is thus enough to prove the first part.
	We start with the left-to-right direction (i.e., from $\Gg$-ext-convergence of $E$ to $\Gg^\dag$-convergence of $\tr(\emptyset,\emptyset,E)$).
	We need two simple auxiliary \lcnamecrefs{trans-subst-com}.
	The first of them says that the $\tr$ function commutes with substitution:

	\begin{lemma}\label{trans-subst-com}
		Let $R[K_1/y_1,\dots,K_k/y_k]$ be a term over $(\Xx,\Zz)$, let $A\in2^{[\gar(R)]}$, and let $Z\in2^\Zz$.
		Then
		\begin{align*}
			\tr(A,Z,R[K_1/y_1,\dots,K_k/y_k]) = (\tr(A,Z,R))[\tr(B,Z,K_i)/y^\dag_{i,B}\mid i\in[k],B\in2^{[\gar(K_i)]}].
		\end{align*}
	\end{lemma}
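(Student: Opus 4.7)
The plan is to prove this by structural induction on $R$. Abbreviate the substitution on the left as $\sigma=[K_i/y_i\mid i\in[k]]$ and the substitution on the right as $\tau=[\tr(B,Z,K_i)/y^\dag_{i,B}\mid i\in[k], B\in 2^{[\gar(K_i)]}]$. We may assume without loss of generality that the $y_i$ are pairwise distinct and lie outside $\dom(Z)$, since otherwise the substitution is either redundant or capturing; this matches the intended use of the lemma.

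First I would clear the easy cases. If $R=X\in\Xx$, both sides reduce to $X_A^\dag$. If $R=y$ with $y\notin\{y_1,\dots,y_k\}$, then $\sigma$ does not act on $y$, and by the assumption $y\notin\dom(Z)$ only when the whole transformation yields $y_A^\dag$ (which is untouched by $\tau$ because $\tau$ replaces only variables of the form $y^\dag_{i,B}$ with $i\in[k]$); in the subcases where $y\in\dom(Z)$, the image under $\tr$ is either $\Omega$ or $\bullet$, both of which are closed and therefore fixed by $\tau$. The structural clauses (5) and (6) for $\bullet\symb[\dots]$ and $\oplus\symb[\dots]$ follow by applying the induction hypothesis inside each subterm, since in those clauses $A=\emptyset$ and $Z$ is passed unchanged.

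The interesting case is $R=y_i$. On the left, $R[\sigma]=K_i$, so the LHS is $\tr(A,Z,K_i)$. On the right, $\tr(A,Z,y_i)=y^\dag_{i,A}$, and $\tau$ maps this precisely to $\tr(A,Z,K_i)$. The two application cases (7) and (8) are routine: in clause (7) with $\tp(K)=\otyp^{\ell+1}\arr\otyp$ we apply the IH to $K$ with first arguments $A[\ell+1\mapsto 0]$ and $A[\ell+1\mapsto 1]$, and to $L$ with first argument $\emptyset$, all using the same $Z$ and the same $\sigma,\tau$; in clause (8) we apply the IH to $K$ with first argument $A$, and to $L$ with first argument $B$ for every $B\in 2^{[\gar(L)]}$, and collect the results. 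In both subcases the tuple of transformed copies of $L$ (one per $B$) substitutes component-wise into the tuple of fresh variables $y^\dag_{i,B}$ appearing in $\tr(A,Z,K)$, which is exactly what $\tau$ prescribes.

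The main obstacle is bookkeeping: one must check that every occurrence of $y^\dag_{i,B}$ that appears in $\tr(A,Z,R)$ is produced solely from an occurrence of $y_i$ in $R$ via clauses (2) and (8) with matching first argument $B$, and conversely that no other variables of the form $y^\dag_{j,C}$ with $j\in[k]$ are introduced by the transformation. This follows directly from the way clause (8) constructs the tuple $(y^\dag_{i,B})_{B\in 2^{[\gar(y_i)]}}$, so that after substitution the arities and indices align. Everything else is a mechanical unfolding of the definitions.
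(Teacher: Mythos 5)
Your proof is correct and follows exactly the route the paper takes: the paper's entire proof is ``a straightforward induction on the structure of $R$,'' and your case analysis (nonterminal, variable in or out of $\dom(Z)$, the substituted variable $y_i$, the $\bullet$/$\oplus$ clauses, and the two application clauses (7) and (8)) is precisely the unfolding of that induction, with the key step being that $\tr(A,Z,y_i)=y^\dag_{i,A}$ is sent by the right-hand substitution to $\tr(A,Z,K_i)$.
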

	
	\begin{proof}
		A straightforward induction on the structure of $R$.
	\end{proof}
	
	The second \lcnamecref{greater-reduces-more} says that by increasing values of the function $Z$ we can make the transformed term only more convergent:
	
	\begin{lemma}\label{greater-reduces-more}
		Let $E$ be an extended term over $(\Xx,\Zz\uplus\set{z})$, and let $Z\in2^\Zz$.
		If $\tr(\emptyset,\allowbreak Z[z\mapsto\nobreak0],\allowbreak E)$ is $\Gg^\dag$-convergent, then also $\tr(\emptyset,Z[z\mapsto1],E)$ is $\Gg^\dag$-convergent.
	\end{lemma}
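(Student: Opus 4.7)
The plan is to derive \cref{greater-reduces-more} from a general monotonicity property of $\Gg^\dag$-convergence under the replacement of $\Omega$ by $\bullet$. Define a relation $\preceq$ on $\Gg^\dag$-terms (possibly open, of arbitrary type) as the smallest reflexive relation that is compatible with every term constructor and contains the pair $\Omega\preceq\bullet$; equivalently, $M\preceq N$ iff $N$ is obtained from $M$ by replacing some occurrences of $\Omega$ (necessarily at positions of type $\otyp$) with $\bullet$.

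The main auxiliary claim to prove is: \emph{if $M$ is $\Gg^\dag$-convergent and $M\preceq N$, then $N$ is $\Gg^\dag$-convergent.} I would prove it by induction on the $\Gg^\dag$-convergence derivation of $M$, with a case analysis on the top-level shape of $M$. Being closed and of type $\otyp$, $M$ is either $\Omega$ (impossible, since $\Omega$ has no reducts), $\bullet\symb[K_1][\dots][K_k]$, $\oplus\symb[K_1][\dots][K_k]$, or a fully applied nonterminal $X_A^\dag\,K_1\,\dots\,K_j$. In the $\bullet$ case, $M\preceq N$ forces $N=\bullet\symb[K_1'][\dots][K_k']$ with $K_i\preceq K_i'$, and one reduces $N$ by the same rule, invoking the inductive hypothesis on each convergent $K_i$. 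The $\oplus$ case is analogous, choosing the same summand as used in the reduction of $M$. In the nonterminal case, both $M$ and $N$ reduce via the rule for $X_A^\dag$ to a singleton containing the RHS with the arguments substituted; a routine substitution monotonicity lemma (proved by induction on $R$: if $K_i\preceq K_i'$ for every $i$, then the substitution of $K_i$ for $y_i$ in $R$ is $\preceq$ to the substitution of $K_i'$ for $y_i$) ensures that the two reducts are $\preceq$-related, so the inductive hypothesis closes the argument.

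It then suffices to establish $\tr(\emptyset,Z[z\mapsto 0],E)\preceq\tr(\emptyset,Z[z\mapsto 1],E)$, which I would prove by induction on the structure of $E$, following the clauses of $\tr$. Most clauses propagate the inductive hypothesis componentwise; clauses (3)--(4), where $E=z$, yield the base case $\Omega\preceq\bullet$ directly, while clause (9) requires only the observation that the binder $z'$ of the explicit substitution may be chosen distinct from $z$, so that the updates $[z\mapsto b]$ and $[z'\mapsto b']$ on $Z$ commute. Combining the two results yields \cref{greater-reduces-more}. The main technical burden lies in the convergence monotonicity claim, and within it in the nonterminal case, whose success depends on the substitution monotonicity lemma; the rest is a straightforward structural induction.
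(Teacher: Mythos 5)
Your proposal is correct and follows essentially the same route as the paper: the paper's proof also observes that $\tr(\emptyset,Z[z\mapsto1],E)$ arises from $\tr(\emptyset,Z[z\mapsto0],E)$ by replacing some occurrences of $\Omega$ with $\bullet$, and then argues (informally, via "the same rewriting still works") that such a replacement preserves $\Gg^\dag$-convergence. Your explicit relation $\preceq$ and the two inductions merely make rigorous what the paper leaves at the level of "tracing the rules", so there is nothing to object to.
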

	
	\begin{proof}
		Denote $P^0=\tr(\emptyset,Z[z\mapsto0],E)$ and $P^1=\tr(\emptyset,Z[z\mapsto1],E)$.
		Tracing the rules of the transformation function, we can see that $P^0$ and $P^1$ are created in the same way,
		with the exception that occurrences of $z$ in $E$ are transformed to $\Omega$ in $P^0$, and to $\bullet$ in $P^1$.
		Thus, $P^1$ can be obtained from $P^0$ by replacing some occurrences of $\Omega$ to $\bullet$.
		We know that $P^0$ is $\Gg^\dag$-convergent, which means that it can be rewritten using the $\rew_\Gg$ relation until reaching empty sets.
		Moreover, the subterms $\Omega$ (which are present in $P^0$, but not in $P^1$) cannot be reached during this rewriting, because $\Omega$ is not $\Gg^\dag$-convergent.
		Thus, $P^1$ can be rewritten in exactly the same way as $P^0$, so it is also $\Gg^\dag$-convergent.
	\end{proof}

	The next \lcnamecref{lem:ext-reduces2trans-reduces} shows how ext-reductions of $\Gg$ are reflected in $\Gg^\dag$:
	
	\begin{lemma}\label{lem:ext-reduces2trans-reduces}
		Let $E$ be an extended term over $(\Xx,\Zz)$ and let $Z\in2^\Zz$.
		If $E\erew_\Gg\Ff$ and $\tr(\emptyset,Z,F)$ is $\Gg^\dag$-convergent for every $F\in\Ff$, then $\tr(\emptyset,Z,E)$ is also $\Gg^\dag$-convergent.
	\end{lemma}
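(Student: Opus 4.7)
The plan is to proceed by induction on the derivation of $E \erew_\Gg \Ff$, case-splitting on which of the six ext-reduction rules was applied at the root. In every case I will unfold $\tr(\emptyset, Z, E)$ using the matching clause(s) of the transformation function and then exhibit a $\rew_{\Gg^\dag}$-reduction showing $\Gg^\dag$-convergence, using the hypothesis that $\tr(\emptyset, Z, F)$ is $\Gg^\dag$-convergent for every $F \in \Ff$.

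The atomic cases go as follows. Rules 2 and 3 are immediate because clauses 5 and 6 of $\tr$ commute literally with $\bullet$ and $\oplus$, so the same $\rew_{\Gg^\dag}$-rule applies. Rules 4 and 5 both unfold via clause 9 of $\tr$ to an $\oplus$-$\bullet$ cap on top of $\tr$-images of the variable and of $L$: in rule 4 the two transformed copies of $z$ collapse to $\Omega$ and $\bullet$, so firing the right branch of the outer $\oplus$ followed by the $\bullet$-rule leaves $\set{\bullet, \tr(\emptyset, Z, L)}$, both convergent; in rule 5 the value of $Z[z\mapsto b]$ at $z'\neq z$ is independent of $b$, so the left branch of the outer $\oplus$ is literally $\tr(\emptyset, Z, z')$, which is convergent by hypothesis. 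Rule 1, the $\beta$-step $X\,K_1\dots K_k\,L_1\dots L_\ell \erew_\Gg \set{F}$ with $F = \esubstdots{R[K_j/y_j,z'_i/z_i]}{L_1}{z'_1}{L_\ell}{z'_\ell}$, is the most laborious but still mechanical: clauses 7, 8, and 1 of $\tr$ decompose the left-hand side into a nested $\oplus$-$\bullet$ tree of depth $\ell$ whose leaves are the applications $X_A^\dag\,(\tr(B, Z, K_j))_{B,j}$, one for each $A \in 2^{[\ell]}$, while clause 9 unfolds $\tr(\emptyset, Z, F)$ into an isomorphic tree with leaves $\tr(\emptyset, Z', R[K_j/y_j,z'_i/z_i])$, where $Z'$ extends $Z$ by $z'_i \mapsto b_i$ for some bit vector $b$. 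A convergent path through the right-hand side singles out a specific $b$; setting $A(\ell+1-i) = b_i$ and firing the $\Gg^\dag$-rule for $X_A^\dag$ produces, via \cref{trans-subst-com} together with $\alpha$-renaming of the fresh $z'_i$'s back to $z_i$'s, precisely that convergent leaf, so the matching path through the left-hand side tree is a convergence witness.

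The only genuinely inductive case is rule 6, where $\esubst{E'}{L}{z} \erew_\Gg \setof{\esubst{F'}{L}{z}}{F' \in \Ff'}$ is derived from $E' \erew_\Gg \Ff'$; this is where I expect the main subtlety. Clause 9 of $\tr$ places $\tr(\emptyset, Z[z\mapsto 0], F')$ and $\tr(\emptyset, Z[z\mapsto 1], F')$ under the two main alternatives of an $\oplus$-$\bullet$ cap, so a priori different $F'$'s could witness convergence through different alternatives, and these local choices need not combine into a uniform choice for $E'$. \cref{greater-reduces-more} saves the day: whenever $\tr(\emptyset, Z[z\mapsto 0], F')$ is convergent so is $\tr(\emptyset, Z[z\mapsto 1], F')$, hence $\tr(\emptyset, Z[z\mapsto 1], F')$ is convergent for every $F' \in \Ff'$, and the induction hypothesis applied with $Z[z\mapsto 1]$ in place of $Z$ yields convergence of $\tr(\emptyset, Z[z\mapsto 1], E')$. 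One final case split closes the proof: if some $F'$ genuinely needs the right alternative then $\tr(\emptyset, Z, L)$ is convergent and we also pick the right branch of the outer $\oplus$ on $E$, reducing via the $\bullet$-rule to $\set{\tr(\emptyset, Z[z\mapsto 1], E'), \tr(\emptyset, Z, L)}$; otherwise every $F'$ took the left alternative, so every $\tr(\emptyset, Z[z\mapsto 0], F')$ is convergent and a second application of the induction hypothesis gives convergence of $\tr(\emptyset, Z[z\mapsto 0], E')$, which is the left branch of the outer $\oplus$.
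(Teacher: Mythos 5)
Your proof is correct and follows essentially the same route as the paper's: induction on the derivation of $E\erew_\Gg\Ff$ with a case analysis on the six rules, using \cref{trans-subst-com} to obtain the leafwise reduction from $X_A^\dag\,(\tr(B,Z,K_j))_{B,j}$ to the transformed rule body in the nonterminal case, and \cref{greater-reduces-more} to reconcile the possibly non-uniform branch choices of the various $F'$ in the explicit-substitution case. The only cosmetic difference is that the paper organizes the nonterminal case as an inner induction on the number of trailing order-$0$ arguments already converted into explicit substitutions, where you instead describe a direct correspondence between the two isomorphic $\oplus$--$\bullet$ trees of depth $\ell$.
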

	
	\begin{proof}
		Induction on the definition of $E\erew_\Gg\Ff$.
		We analyze particular cases appearing in the definition.
		Missing details are given in \cref{app:ext-reduces2trans-reduces}.
		
		In Case (1) $E$ consists of an application of arguments to some nonterminal $X$.
		For simplicity of presentation, suppose that $X$ has two arguments: $y$ of positive order, and $z$ of order $0$
		(the general case is handled in the appendix).
		Then 
		\begin{align*}
			E=X\,K\,L,&&
			\mbox{and}&&
			\Ff=\set{F}&&
			\mbox{for}&&
			F=\esubst{R[K/y,z'/z]}{L}{z'},
		\end{align*}
		where $\Rr(X)=(X\,y\,z\to R)$ and $z'$ is a fresh variable of type $\otyp$ not appearing in $\Zz$.
		For $j\in\set{0,1}$ let
		\begin{align*}
			P^j&=\tr([1\mapsto j],Z,X\,K),&
			\mbox{and}&&
			Q^j&=\tr(\emptyset,Z[z'\mapsto j],R[K/y,z'/z]).
		\end{align*}
		First, we prove that $P^j\rew_{\Gg^\dag}\set{Q^j}$.
		By definition we have that
		\begin{align*}
			P^j=X^\dag_{[1\mapsto j]}\,(\tr(B,Z,K))_{B\in2^{[\gar(K)]}},
		\end{align*}
		and by \cref{trans-subst-com} we have that
		\begin{align*}
			Q^j&=\tr(\emptyset,Z[z'\mapsto j],R[z'/z])[\tr(B,Z[z'\mapsto j],K)/y^\dag_B\mid B\in2^{[\gar(K)]}]\\
			&=\tr(\emptyset,[z\mapsto j],R)[\tr(B,Z,K)/y^\dag_B\mid B\in2^{[\gar(K)]}]],
		\end{align*}
		where the second equality holds because the $z'$ does not appear in $K$ and the variables from $\dom(Z)$ do not appear in $R$.
		Recalling that the rule for $X_A^\dag$ is
		\begin{align*}
			X_{[1\mapsto j]}^\dag\,(y_B^\dag)_{B\in2^{[\gar(y)]}}\to\tr(\emptyset,[z\mapsto j],R),
		\end{align*}
		we immediately see that indeed $P^j\rew_{\Gg^\dag}\set{Q^j}$.
		Having this, we recall that
		\begin{align}\label[equalities]{eq:EF}
			\tr(\emptyset,Z,E)=\oplus\symb[P^0][\bullet\symb[P^1][L']]&&
			\mbox{and}&&
			\tr(\emptyset,Z,F)=\oplus\symb[Q^0][\bullet\symb[Q^1][L']]
		\end{align}
		for appropriate $L'$ (obtained by transforming $L$).
		Recall that, by definition, a term $M$ is $\Gg^\dag$-convergent if and only if $M\rew_{\Gg^\dag}\Nn$ for some set $\Nn$ of $\Gg^\dag$-convergent terms.
		Thus, the only way why $\tr(\emptyset,Z,F)$ can be $\Gg^\dag$-convergent (which holds by assumption) is that  
		either $Q^0$ is $\Gg^\dag$-convergent, or both $Q^1$ and $L'$ are $\Gg^\dag$-convergent.
		Because of the reduction $P^j\rew_{\Gg^\dag}\set{Q^j}$ we have that either $P^0$ is $\Gg^\dag$-convergent, or both $P^1$ and $L'$ are $\Gg^\dag$-convergent,
		which implies that $\tr(\emptyset,Z,E)$ is $\Gg^\dag$-convergent.

		In Cases (2) and (3), when $E=\bullet\symb[K_1][\dots][K_k]$ or $E=\oplus\symb[K_1][\dots][K_k]$,
		we have a reduction from $\tr(\emptyset,Z,E)$ to $\set{\tr(\emptyset,Z,F)\mid F\in\Ff}$,
		because $\tr$ distributes over $\bullet\symb[\dots]$ and $\oplus\symb[\dots]$.
		In Cases (4) and (5) (elimination of explicit substitution) we also have similar reductions.

		Finally, in Case (6) we have that
		\begin{align*}
			E&=\esubst{E_0}{L}{z}, &
			\Ff&=\set{\esubst{E_1}{L}{z},\dots,\esubst{E_k}{L}{z}},&&\mbox{and}
			&E_0\erew_\Gg\set{E_1,\dots,E_k}.
		\end{align*}
		By definition, for every $i\in\set{0,\dots,k}$ we have that
		\begin{align}
			&\tr(\emptyset,Z,\esubst{E_i}{L}{z})=\oplus\symb[P_i^0][\bullet\symb[P_i^1][L']],\qquad\mbox{where}\displaybreak[0]\label{eq:EP}\\
			&P_i^0=\tr(\emptyset,Z[z\mapsto0],E_i),\qquad
			P_i^1=\tr(\emptyset,Z[z\mapsto1],E_i),\qquad
			L'=\tr(\emptyset,Z,L).\nonumber
		\end{align}
		Thus, $\tr(\emptyset,Z,\esubst{E_i}{L}{z})$ is $\Gg^\dag$-convergent if and only if
		either $P_i^0$ is $\Gg^\dag$-convergent, or both $P_i^1$ and $L'$ are $\Gg^\dag$-convergent.
		By assumption this is the case for all $i\in[k]$, and we have to prove this for $i=0$.
		If for every $i\in[k]$ we have the former case (i.e., $P_i^0$ is $\Gg^\dag$-convergent),
		by the induction hypothesis (used with the function $Z[z\mapsto0]$) we have that $P_0^0$ is $\Gg^\dag$-convergent, and we are done.
		In the opposite case, for some $i\in[k]$ (but for at least one of them) we have that both $P_i^1$ and $L'$ are $\Gg^\dag$-convergent,
		and for the remaining $i\in[k]$ we have that $P_i^0$ is $\Gg^\dag$-convergent.
		Using \cref{greater-reduces-more} we deduce that if $P_i^0$ is $\Gg^\dag$-convergent, then also $P_i^1$ is $\Gg^\dag$-convergent.
		Thus actually $P_i^1$ is $\Gg^\dag$-convergent for every $i\in[k]$, and additionally $L'$ is $\Gg^\dag$-convergent.
		By the induction hypothesis (used with the function $Z[z\mapsto1]$) we have that $P_0^1$ is $\Gg^\dag$-convergent, and we are also done.
	\end{proof}

	We can now conclude with the left-to-right direction of \cref{eg2trans}:

	\begin{lemma}
		Let $E$ be an extended term over $(\Xx,\emptyset)$.
		If $E$ is $\Gg$-ext-convergent, then $\tr(\emptyset,\emptyset,E)$ is $\Gg^\dag$-convergent.
	\end{lemma}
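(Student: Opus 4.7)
The plan is to proceed by a straightforward induction on the inductive definition of $\Gg$-ext-convergence, since all the heavy lifting has already been done in \cref{lem:ext-reduces2trans-reduces}. Recall that $E$ being $\Gg$-ext-convergent means, by definition, that there exists a set $\Ff$ of $\Gg$-ext-convergent extended terms such that $E \erew_\Gg \Ff$. This is an inductive definition, so I get access to an induction hypothesis that applies to every $F \in \Ff$.

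First, I would apply the induction hypothesis to each $F \in \Ff$ to conclude that $\tr(\emptyset,\emptyset,F)$ is $\Gg^\dag$-convergent for every $F \in \Ff$. Then I would invoke \cref{lem:ext-reduces2trans-reduces} instantiated with $\Zz = \emptyset$ and $Z = \emptyset$ (the empty partial function). The two hypotheses of that lemma are exactly $E \erew_\Gg \Ff$ (which we have by the definition of ext-convergence) and $\tr(\emptyset,\emptyset,F)$ being $\Gg^\dag$-convergent for every $F \in \Ff$ (which we just obtained from the induction hypothesis). Its conclusion is precisely that $\tr(\emptyset,\emptyset,E)$ is $\Gg^\dag$-convergent, closing the induction.

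Since the whole statement falls out of \cref{lem:ext-reduces2trans-reduces} applied repeatedly along the well-founded derivation tree witnessing $\Gg$-ext-convergence of $E$, there is no real obstacle here: the case analysis on the form of the ext-reduction, including the nontrivial reordering of $\oplus$/$\bullet$ structure via \cref{greater-reduces-more}, is entirely absorbed into the previous lemma. The only thing to be mildly careful about is to remember that $E$ is assumed to live over $(\Xx,\emptyset)$, so using $Z = \emptyset$ is type-correct (the empty function has the empty domain, matching the empty set of free variables of $E$); this stays consistent through the induction because ext-reductions do not introduce new free variables of type $\otyp$ at the top level (the fresh variables $z_i'$ introduced in Case~(1) are immediately bound by explicit substitutions).
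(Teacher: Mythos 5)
Your proof is correct and takes essentially the same approach as the paper: induction on the derivation of $\Gg$-ext-convergence, applying the induction hypothesis to every $F\in\Ff$ and then closing the step with \cref{lem:ext-reduces2trans-reduces} instantiated at $Z=\emptyset$. Your side remark that the terms in $\Ff$ remain over $(\Xx,\emptyset)$ (the fresh $z_i'$ being immediately bound) is accurate and is the only point where care is needed.
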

	
	\begin{proof}
		Induction on the fact that $E$ is $\Gg$-ext-convergent.
		Because $E$ is $\Gg$-ext-convergent, $E\erew_\Gg\Ff$ for some set $\Ff$ of $\Gg$-ext-convergent extended terms, for which we can apply the induction hypothesis.
		The induction hypothesis says that $\tr(\emptyset,\emptyset,F)$ is $\Gg^\dag$-convergent for every $F\in\Ff$.
		In such a situation \cref{lem:ext-reduces2trans-reduces} implies that $\tr(\emptyset,\emptyset,E)$ is also $\Gg^\dag$-convergent, as required.
	\end{proof}

	For a proof in the opposite direction we need the following definition.
	We say that a term $M$ \emph{$\Gg^\dag$-convergent in $n$ steps} if $M\rew_{\Gg^\dag}\set{N_1,\dots,N_k}$, and every $N_i$ is $\Gg^\dag$-convergent in $n_i$ steps, 
	and $n=1+n_1+\dots+n_k$ (i.e., we count $1$ for the above reduction, and we sum the numbers of steps needed to reduce all $N_i$).
	Clearly a term $M$ is $\Gg^\dag$-convergent if and only if it is $\Gg^\dag$-convergent in $n$ steps for some $n\in\Nat$.
	Notice that the number $n$ is not determined by $M$ (i.e., that the same term $M$ can be $\Gg^\dag$-convergent in $n$ steps for multiple values of $n$).
	We can now state the converse of \cref{lem:ext-reduces2trans-reduces}:

	\begin{lemma}\label{lem:trans-reduces2ext-reduces} 
		Let $E$ be an extended term over $(\Xx,\Zz)$ and let $Z\in2^\Zz$.
		If $\tr(\emptyset,Z,E)$ is $\Gg^\dag$-convergent in $n$ steps and $E$ is not a variable, 
		then there exists a set $\Ff$ of extended terms such that
		$E\erew_\Gg\Ff$ and $\tr(\emptyset,Z,F)$ is $\Gg^\dag$-convergent in less than $n$ steps for every $F\in\Ff$.
	\end{lemma}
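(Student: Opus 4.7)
My plan is a structural induction on $E$. Since $E$ is not a variable by hypothesis, and since in an extended term over $(\Xx,\Zz)$ every variable has type $\otyp$ and therefore cannot head an application, $E$ must fall into one of four shapes: $\bullet\symb[K_1][\dots][K_k]$, $\oplus\symb[K_1][\dots][K_k]$, a full application $X\,K_1\,\dots\,K_k\,L_1\,\dots\,L_\ell$ with $\ell=\gar(X)$ (this covers also $E=X$), or an explicit substitution $\esubst{E_0}{L}{z}$. In every case I plan to define $\Ff$ directly, by reading off a reduction-tree witness of $\Gg^\dag$-convergence of $\tr(\emptyset,Z,E)$.

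The $\bullet$ and $\oplus$ cases will be immediate: the transformation distributes over these constructs (rules (5) and (6)), and the only applicable $\rew_{\Gg^\dag}$-reduction of $\tr(\emptyset,Z,E)$ mirrors rule (2) respectively (3) of $\erew_\Gg$. I will take $\Ff=\set{K_1,\dots,K_k}$ in the first subcase and $\Ff=\set{K_i}$ for the $i$ chosen by the reduction in the second; the strict step-count drop is immediate from the inductive definition of ``convergent in $n$ steps''.

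The application case will be the main obstacle. I will take $\Ff=\set{F}$ for the term $F=\esubstdots{R[K_1/y_1,\dots,K_k/y_k,z_1'/z_1,\dots,z_\ell'/z_\ell]}{L_1}{z_1'}{L_\ell}{z_\ell'}$ provided by rule (1) of $\erew_\Gg$, so the whole task reduces to bounding the step-count of a convergence witness of $\tr(\emptyset,Z,F)$. Unfolding $\tr$ on $E$ by $\ell$ applications of rule (7) (for the trailing $L_i$'s) followed by $k$ applications of rule (8) (for the higher-order $K_j$'s), I will exhibit $\tr(\emptyset,Z,E)$ as an outer $\oplus$-$\bullet$-tree of depth $\ell$, with $\tr(\emptyset,Z,L_i)$ appearing in the $\bullet$-branches, and with leaves indexed by $A\in 2^{[\ell]}$ equal to $X_A^\dag\,(\tr(B_j,Z,K_j))_{j\in[k],B_j\in2^{[\gar(K_j)]}}$. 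Unfolding $\tr$ on $F$ by $\ell$ applications of the extended rule (9) will produce the same outer tree, whose $A$-indexed leaf, by \cref{trans-subst-com} together with freshness of the $z_i'$, equals $\tr(\emptyset,[z_i\mapsto A(\ell+1-i)],R)[\tr(B_j,Z,K_j)/y^\dag_{j,B_j}]$ --- exactly the one-step $\rew_{\Gg^\dag}$-reduct of the $A$-indexed leaf of $\tr(\emptyset,Z,E)$ under the rule for $X_A^\dag$. Thus every convergence witness of $\tr(\emptyset,Z,E)$ transfers to one of $\tr(\emptyset,Z,F)$ by dropping the single leaf $X_A^\dag$-rule reduction, forcing a strict step-count drop. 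The delicate part will be the bookkeeping that matches the two $\oplus$-$\bullet$-trees, in particular the correspondence $a_i=A(\ell+1-i)$ between the indexing of leaves on the two sides; I expect to factor this out as a short auxiliary claim describing $\tr(\emptyset,Z,X\,K_1\,\dots\,K_k\,L_1\,\dots\,L_m)$ for arbitrary $m\leq\ell$.

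Finally, in the explicit-substitution case $E=\esubst{E_0}{L}{z}$, rule (9) expands $\tr(\emptyset,Z,E)$ to $\oplus\symb[P^0][\bullet\symb[P^1][L']]$ with $P^j=\tr(\emptyset,Z[z\mapsto j],E_0)$ and $L'=\tr(\emptyset,Z,L)$, so convergence of $\tr(\emptyset,Z,E)$ in $n$ steps forces either (a) $P^0$ is $\Gg^\dag$-convergent, or (b) both $P^1$ and $L'$ are $\Gg^\dag$-convergent, in each case with strictly fewer than $n$ steps. If $E_0=z$, then $P^0=\Omega$ is not convergent, so (b) must hold and I take $\Ff=\set{L}$ via rule (4); if $E_0$ is a variable $z'\in\Zz$ with $z'\neq z$, then $P^0=P^1=\tr(\emptyset,Z,z')$ and $Z(z')=1$ is forced, so I take $\Ff=\set{z'}$ via rule (5); otherwise $E_0$ is not a variable and the induction hypothesis applied to $E_0$ with $Z[z\mapsto j]$ ($j=0$ in case (a), $j=1$ in case (b)) yields $\Ff_0$ with $E_0\erew_\Gg\Ff_0$ and a strict step-count drop for each $\tr(\emptyset,Z[z\mapsto j],F_0)$, after which $\Ff=\set{\esubst{F_0}{L}{z}\mid F_0\in\Ff_0}$ via rule (6) satisfies the requirements by a routine step-count check reassembled through rule (9).
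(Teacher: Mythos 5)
Your proposal is correct and follows essentially the same route as the paper's proof: the same case split on the shape of $E$, the same choice of $\Ff$ in each case, and the same matching of the $\oplus$--$\bullet$ trees of $\tr(\emptyset,Z,E)$ and $\tr(\emptyset,Z,F)$ via \cref{trans-subst-com} in the application case (the paper formalises your ``auxiliary claim'' as an inner induction on the number of peeled-off trailing arguments). The only cosmetic difference is that you induct on the structure of $E$ where the paper inducts on the number of explicit substitutions; these coincide here because the induction hypothesis is only ever invoked on the body $E_0$ of an explicit substitution.
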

	
	\begin{proof}
		Induction on the number of explicit substitutions in $E$.
		Depending on the shape of $E$, we have several cases.
		Missing details are given in \cref{app:trans-reduces2ext-reduces}.
		
		One case is $E$ consists of a nonterminal $X$ with some arguments applied.
		For simplicity of presentation, we again suppose that $X$ has two arguments: $y$ of positive order, and $z$ of order $0$.
		Thus, $E$ is of the form $E=X\,K\,L$.
		Let $X\,y\,z\to R$ be the rule for $X$, and let $z'$ be a fresh variable of type $\otyp$ not appearing in $\Zz$.
		In such a situation, taking $F=\esubst{R[K/y,z'/z]}{L}{z'}$ we have that $E\erew_\Gg\set{F}$.
		Recall the terms $P^j$ and $Q^j$ (for $j\in\set{0,1}$) from the proof of \cref{lem:ext-reduces2trans-reduces}.
		In that proof we have observed that $P^j\rew_{\Gg^\dag}\set{Q^j}$.
		But clearly this is the only way how $P^j$ can reduce,
		so if $P^j$ is $\Gg^\dag$-convergent in $n_j$ steps, then necessarily $Q^j$ is $\Gg^\dag$-convergent in $n_j-1$ steps.
		By \cref{eq:EF} we have that if $\tr(\emptyset,Z,E)$ is $\Gg^\dag$-convergent in $n$ steps,
		then either $P^0$ is $\Gg^\dag$-convergent in $n_0=n-1$ steps,
		or both $P^1$ and $L'$ are $\Gg^\dag$-convergent in, respectively, $n_1$ and $n-n_1-2$ steps, for some $n_1\in\Nat$.
		In the former case, $Q^0$ is $\Gg^\dag$-convergent in $n_0-1=n-2$ steps, so $\tr(\emptyset,Z,F)$ is $\Gg^\dag$-convergent in $n-1$ steps, and we are done.
		In the latter case, $Q^1$ is $\Gg^\dag$-convergent in $n_1-1$ steps, so $\tr(\emptyset,Z,F)$ is $\Gg^\dag$-convergent in $(n_1-1)+(n-n_1-2)+2=n-1$ steps, and we are done again.
		
		Notice that we do not have a similar case for a variable with some arguments applied, because the whole $E$ is not a variable,
		and because (by definition of an extended term) all free variables of $E$ are of type $\otyp$.
		
		The cases of $E=\bullet\symb[K_1][\dots][K_k]$ and $E=\oplus\symb[K_1][\dots][K_k]$ are straightforward.

		It remains to assume that $E$ is an explicit substitution.
		If $E=\esubst{z}{L}{z}$, we should take $\Ff=\set{L}$, and if $E=\esubst{z'}{L}{z}$ for $z'\neq z$, we should take $\Ff=\set{z'}$
		(in these two subcases we cannot use the induction assumption, because it does not work for an extended term being a single variable).
		Otherwise $E=\esubst{E_0}{L}{z}$, where $E_0$ is not a variable.
		Recall that $\tr(\emptyset,Z,E)=\oplus\symb[P_0^0][\bullet\symb[P_0^1][L']]$ for $P_0^0, P_0^1, L'$ as in the proof of \cref{lem:ext-reduces2trans-reduces}.
		By assumption $\tr(\emptyset,Z,E)$ is $\Gg^\dag$-convergent in $n$ steps, so either $P_0^0$ is $\Gg^\dag$-convergent in $n'=n-1$ steps,
		or both $P_0^1$ and $L'$ are $\Gg^\dag$-convergent in, respectively, $n'$ and $n-n'-2$ steps, for some $n'\in\Nat$.
		Let $j=0$ in the former case and $j=1$ in the latter case.
		The induction hypothesis gives us a set $\set{E_1,\dots,E_k}$ such that $E_0\erew_\Gg\set{E_1,\dots,E_k}$
		and $\tr(\emptyset,Z[z\mapsto j],E_i)$ is $\Gg^\dag$-convergent in less than $n'$ steps for every $i\in[k]$.
		We then take
		\begin{align*}
			\Ff&=\set{\esubst{E_1}{L}{z},\dots,\esubst{E_k}{L}{z}}.
		\end{align*}
		\cref{eq:EP} holds now for all $i\in\set{0,\dots,k}$.
		For $j=0$ we use that the fact that $\tr(\emptyset,Z,\esubst{E_i}{L}{z})\rew_{\Gg^\dag}\set{P_i^0}$, 
		which implies that $\tr(\emptyset,Z,\esubst{E_i}{L}{z})$ is $\Gg^\dag$-convergent in less than $n'+1=n$ steps, as required.
		For $j=1$ we use that the fact that $\tr(\emptyset,Z,\esubst{E_i}{L}{z})\rew_{\Gg^\dag}\set{\bullet\symb[P_i^1][L']}$ 
		and $\bullet\symb[P_i^1][L']\rew_{\Gg^\dag}\set{P_i^1,L'}$,
		which implies that $\tr(\emptyset,Z,\esubst{E_i}{L}{z})$ is $\Gg^\dag$-con\-ver\-gent in less than $n'+(n-n'-2)+2=n$ steps, as required.
	\end{proof}

	The next \lcnamecref{last-lemma} finishes the proof of \cref{eg2trans}, and thus the proof of correctness of our transformation:

	\begin{lemma}\label{last-lemma}
		Let $E$ be an extended term over $(\Xx,\emptyset)$.
		If $\tr(\emptyset,\emptyset,E)$ is $\Gg^\dag$-convergent then $E$ is $\Gg$-ext-convergent.
	\end{lemma}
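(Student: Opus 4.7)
The plan is to proceed by strong induction on the number $n$ such that $\tr(\emptyset,\emptyset,E)$ is $\Gg^\dag$-convergent in $n$ steps, using \cref{lem:trans-reduces2ext-reduces} as the driving tool. The statement is essentially a repackaging of that lemma into an inductive argument that climbs through the reduction tree of $\tr(\emptyset,\emptyset,E)$ and reconstructs, step by step, a witnessing tree of ext-reductions for $\Gg$-ext-convergence of $E$.

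Before starting the induction, I would record two small facts that ensure \cref{lem:trans-reduces2ext-reduces} remains applicable throughout. Since $E$ is over $(\Xx,\emptyset)$ there are no free variables available, so $E$ itself is not a variable. Moreover, inspecting the ext-reduction rules (1)--(6), ext-reductions preserve the set of free variables: whenever $E\erew_\Gg\Ff$, every $F\in\Ff$ is again an extended term over $(\Xx,\emptyset)$. Hence every extended term encountered during the induction continues to satisfy the ``not a variable'' hypothesis of \cref{lem:trans-reduces2ext-reduces}.

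For the induction step, given that $\tr(\emptyset,\emptyset,E)$ is $\Gg^\dag$-convergent in $n$ steps, I would apply \cref{lem:trans-reduces2ext-reduces} with $\Zz=\emptyset$ and $Z=\emptyset$ to obtain a set $\Ff$ with $E\erew_\Gg\Ff$ such that $\tr(\emptyset,\emptyset,F)$ is $\Gg^\dag$-convergent in fewer than $n$ steps for every $F\in\Ff$. The induction hypothesis then yields that every such $F$ is $\Gg$-ext-convergent, and by the very definition of $\Gg$-ext-convergence this makes $E$ itself $\Gg$-ext-convergent. The base case is handled implicitly: when $n$ is small enough to force $\Ff=\emptyset$ (which happens in particular for $n=1$, since no term can be $\Gg^\dag$-convergent in strictly fewer than $1$ step), $E$ is $\Gg$-ext-convergent vacuously.

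I do not expect any real obstacle in this final \lcnamecref{last-lemma}, because all the technical content lies in \cref{lem:trans-reduces2ext-reduces}. The only subtle point to check is that the ``not a variable'' hypothesis of that lemma is always satisfied at every stage of the induction, which, as observed above, follows immediately from the absence of free variables in $E$ and from the fact that ext-reductions do not introduce any.
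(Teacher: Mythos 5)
Your proof is correct and follows essentially the same route as the paper: induction on the number $n$ of steps in which $\tr(\emptyset,\emptyset,E)$ is $\Gg^\dag$-convergent, invoking \cref{lem:trans-reduces2ext-reduces} (which you cite correctly; the paper's own proof mistakenly writes \cref{lem:ext-reduces2trans-reduces} here) and closing with the definition of $\Gg$-ext-convergence. Your extra checks --- that $E$ is not a variable and that ext-reduction keeps terms over $(\Xx,\emptyset)$ --- are exactly the observations needed.
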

	
	\begin{proof}
		Induction on the (smallest) number $n$ such that $\tr(\emptyset,\emptyset,E)$ is $\Gg^\dag$-convergent in $n$ steps.
		By assumption $E$ is not a variable, because it is an extended term over $(\Xx,\emptyset)$ (no free variables).
		So, by \cref{lem:ext-reduces2trans-reduces} there exists a set $\Ff$ of extended terms such that
		$E\erew_\Gg\Ff$ and $\tr(\emptyset,\emptyset,F)$ is $\Gg^\dag$-convergent in less than $n$ steps for every $F\in\Ff$.
		By the induction hypothesis every $F\in\Ff$ is $\Gg$-ext-convergent, so by definition also $E$ is $\Gg$-ext-convergent.
	\end{proof}

\section{Conclusions}

	We have presented a new, simple algorithm checking whether a higher-order grammar generates a nonempty language.
	One may ask whether this algorithm can be used in practice.
	Of course the complexity $n$\textsf{-EXPTIME} for grammars of order $n$ is unacceptably large
	(even if we take into account the fact that we are $n$-fold exponential only in the arity of types, not in the size of a grammar),
	but one has to recall that there exist tools solving the considered problem in such a complexity.
	The reason why these tools work is that the time spent by them on ``easy'' inputs is much smaller than the worst-case complexity (and many ``typical inputs'' are indeed easy).
	Unfortunately, this is not the case for our algorithm: the size of the grammar resulting from our transformation is always large,
	even if the original grammar generated a nonempty (or empty) language for some ``easy reason''.
	Thus, our algorithm is mainly of a theoretical interest.
	
	The presented transformation preserves nonemptiness, and thus can be used to solve the nonemptiness problem for higher-order grammars.
	However, it seems feasible that other problems concerning higher-order grammars (higher-order recursion schemes),
	like model-checking against parity automata or the simultaneous unboundedness problem~\cite{diagonal}, can be solved using similar transformations.
	Developing such transformations is a possible direction for further work.

\bibliography{bib}

\appendix

\section{Proof of Lemma~\ref{g2eg}}\label{app:std2ext}

	In this section we prove \cref{g2eg}.
	To this end, fix a grammar $\Gg=(\Xx,X_0,\Rr)$.	
	Of course the second part (about convergence/ext-convergence of the grammar)
	follows from the first part (about convergence/ext-convergence of an extended term) applied to the starting nonterminal $X_0$.
	It is thus enough to prove the first part.
	We start with the left-to-right direction (i.e., from ext-convergence to convergence).
	We need a simple auxiliary \lcnamecref{lem:Redudes-add-SubstituteFirst}:

	\begin{lemma}\label{lem:Redudes-add-SubstituteFirst}
		If $M\rew_\Gg\Nn$ and $M[K/x]$ is a valid term, then $M[K/x]\rew_\Gg\setof{N[K/x]}{N\in\Nn}$.
	\end{lemma}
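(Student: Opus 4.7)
The plan is to proceed by case analysis on the derivation of $M \rew_\Gg \Nn$, matching one of the three clauses defining the reduction relation.

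For clauses (2) and (3), where $M$ is of the form $\bullet\symb[K_1][\dots][K_k]$ or $\oplus\symb[K_1][\dots][K_k]$, the substitution $M[K/x]$ equals $\bullet\symb[K_1[K/x]][\dots][K_k[K/x]]$ or $\oplus\symb[K_1[K/x]][\dots][K_k[K/x]]$ respectively, because substitution is defined componentwise on these constructors. Applying the same reduction clause to $M[K/x]$ yields exactly the set $\setof{N[K/x]}{N\in\Nn}$, so there is nothing more to do.

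The only interesting case is clause (1), where $M = X\,K_1\,\dots\,K_k$ with $\Rr(X) = (X\,y_1\,\dots\,y_k \to R)$ and $\Nn = \set{R[K_1/y_1,\dots,K_k/y_k]}$. Then $M[K/x] = X\,K_1[K/x]\,\dots\,K_k[K/x]$, which reduces under clause (1) to the singleton set containing $R[K_1[K/x]/y_1,\dots,K_k[K/x]/y_k]$. I would then invoke the standard substitution-commutation identity
\begin{align*}
    R[K_1/y_1,\dots,K_k/y_k][K/x] = R[K_1[K/x]/y_1,\dots,K_k[K/x]/y_k],
\end{align*}
which holds because $R$ is a term over $(\Xx,\set{y_1,\dots,y_k})$, so $x$ is not free in $R$ (we may assume $x \notin \set{y_1,\dots,y_k}$ by $\alpha$-renaming the bound variables of the rule, and by the usual convention that $y_i$ does not occur free in $K$). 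This identity delivers exactly the desired right-hand side $\setof{N[K/x]}{N\in\Nn}$.

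The only subtle point is keeping the variable-convention bookkeeping clean in clause (1): one needs that the $y_i$ are distinct from $x$ and do not occur free in $K$, and that $x$ does not occur free in $R$. All of these follow from the grammar definition together with the assumption that $M[K/x]$ is a valid (well-typed, capture-free) term, so no real obstacle arises. The remaining cases are entirely mechanical.
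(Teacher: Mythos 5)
Your proof is correct and follows the same route as the paper, which simply declares the statement "a trivial case analysis on the definition of $M\rew_\Gg\Nn$"; you have merely spelled out the details, including the key observation in clause (1) that $x$ cannot occur in $R$ because $R$ is a term over $(\Xx,\set{y_1,\dots,y_k})$, so the substitution-commutation identity holds without any capture issues.
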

	
	\begin{proof}
		 A trivial case analysis on the definition of $M\rew_\Gg\Nn$.
	\end{proof}
	
	The main ingredient of the proof is the following \lcnamecref{lem:e2ne-transfer}:
	
	\begin{lemma}\label{lem:e2ne-transfer}
		If $E\erew_\Gg\Ff$, then either $\exp(E)\rew_\Gg\exp(\Ff)$, or $\exp(\Ff)=\set{\exp(E)}$.
	\end{lemma}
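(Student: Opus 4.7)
The plan is to proceed by induction on the derivation of $E \erew_\Gg \Ff$, handling each of the six cases (1)--(6) of the ext-reduction relation. Before starting, I would note a useful structural fact: any extended term has the shape of a stack of outer explicit substitutions wrapped around a non-extended term $K$ of type $\otyp$, i.e.\ $\esubstdots{K}{L_1}{z_1}{L_\ell}{z_\ell}$, and the $\exp$ function simply collapses this stack. In particular $\exp(\esubst{E_0}{L}{z}) = \exp(E_0)[L/z]$, which is the identity that makes the induction go through.

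In cases (1)--(3) the term $E$ has no outer explicit substitutions, so $\exp(E) = E$ and the subterms $K_i$, $L_j$ are already non-extended. In case (1) I would expand $\exp(\Ff)$ by applying the $\ell$ outer explicit substitutions; since the $z_i'$ are chosen fresh, they commute with the inner substitutions for $y_1,\dots,y_k$, and the result is exactly $R[K_1/y_1,\dots,K_k/y_k,L_1/z_1,\dots,L_\ell/z_\ell]$, which is precisely what the standard reduction clause of $\rew_\Gg$ produces from $X\,K_1\,\dots\,K_k\,L_1\,\dots\,L_\ell$. Cases (2) and (3) are immediate matches with the corresponding standard reduction clauses, yielding $\exp(E)\rew_\Gg\exp(\Ff)$.

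Cases (4) and (5) are the ones that fall into the second disjunct. For $\esubst{z}{L}{z}$ we compute $\exp(E) = z[L/z] = L$ and $\exp(\Ff) = \set{L}$; for $\esubst{z'}{L}{z}$ with $z'\neq z$ we get $\exp(E) = z'$ and $\exp(\Ff) = \set{z'}$. Either way, $\exp(\Ff) = \set{\exp(E)}$.

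The interesting case is (6), where $E = \esubst{E_0}{L}{z}$ and $\Ff = \setof{\esubst{F}{L}{z}}{F\in\Ff_0}$ for some $E_0\erew_\Gg\Ff_0$. Using the identity $\exp(\esubst{\cdot}{L}{z}) = \exp(\cdot)[L/z]$, we have $\exp(E) = \exp(E_0)[L/z]$ and $\exp(\Ff) = \setof{\exp(F)[L/z]}{F\in\Ff_0}$. The induction hypothesis gives either $\exp(E_0)\rew_\Gg\exp(\Ff_0)$ or $\exp(\Ff_0) = \set{\exp(E_0)}$. In the second subcase, $\exp(\Ff) = \set{\exp(E_0)[L/z]} = \set{\exp(E)}$ directly. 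In the first subcase, I would invoke \cref{lem:Redudes-add-SubstituteFirst} with $M := \exp(E_0)$, $\Nn := \exp(\Ff_0)$, $K := L$, $x := z$ to lift the reduction under the substitution, obtaining $\exp(E) = \exp(E_0)[L/z] \rew_\Gg \setof{N[L/z]}{N\in\exp(\Ff_0)} = \exp(\Ff)$. The main point requiring care is the freshness bookkeeping in case (1) (ensuring the $z_i'$ do not clash with free variables of $K_i$ or of $\Zz$) and the bureaucratic verification that $\exp$ genuinely commutes with outer explicit substitution, but once those are in place the induction is routine.
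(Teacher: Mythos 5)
Your proof is correct and follows essentially the same route as the paper's: induction on the derivation of $E\erew_\Gg\Ff$, with cases (1)--(3) matching the standard reduction clauses directly, cases (4)--(5) landing in the second disjunct, and case (6) handled via the identity $\exp(\esubst{E_0}{L}{z})=\exp(E_0)[L/z]$ together with \cref{lem:Redudes-add-SubstituteFirst} to push the reduction under the substitution. No gaps.
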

	
	\begin{proof}
		Induction on the definition of $E\erew_\Gg\Ff$.
		\begin{bracketenumerate}
		\item	If 
			\begin{align*}
				E&=X\,K_1\,\dots\,K_k\,L_1\,\dots\,L_\ell &&\mbox{and}\\
				\Ff&=\set{\esubstdots{R[K_1/y_1,\dots,K_k/y_k,z_1'/z_1,\dots,z_\ell'/z_\ell]}{L_1}{z_1'}{L_\ell}{z_\ell'}}, 
			\end{align*}
			where $\ell=\gar(X)$, then
			\begin{align*}
				\exp(E)=E&\rew_\Gg\set{R[K_1/y_1,\dots,K_k/y_k,L_1/z_1,\dots,L_\ell/z_\ell]}\\
				&=\set{\exp(\esubstdots{R[K_1/y_1,\dots,K_k/y_k,z_1'/z_1,\dots,z_\ell'/z_\ell]}{L_1}{z_1'}{L_\ell}{z_\ell'})}\\
				&=\exp(\Ff).
			\end{align*}
		\item	If $E=\bullet\symb[K_1][\dots][K_k]$ and $\Ff=\set{K_1,\dots,K_k}$, then
			\begin{align*}
				\exp(E)=E\rew_G\set{K_1,\dots,K_k}=\set{\exp(K_1),\dots,\exp(K_k)}=\exp(\Ff).
			\end{align*}
		\item	If $E=\oplus\symb[K_1][\dots][K_k]$ and $\Ff=\set{K_i}$, then
			\begin{align*}
				\exp(E)=E\rew_G\set{K_i}=\set{\exp(K_i)}=\exp(\Ff).
			\end{align*}
		\item	If $E=\esubst{z}{L}{z}$ and $\Ff=\set{L}$, then $\exp(\Ff)=\set{\exp(L)}=\set{L}=\set{\exp(E)}$.
		\item	If $E=\esubst{z'}{L}{z}$ and $\Ff=\set{z'}$, then $\exp(\Ff)=\set{\exp(z')}=\set{z'}=\set{\exp(E)}$.
		\item	Finally, if $E=\esubst{E'}{L}{z}$ and $\Ff=\setof{\esubst{F'}{L}{z}}{F'\in\Ff'}$ and $E'\erew_\Gg\Ff'$,
			we have $\exp(E)=(\exp(E'))[L/z]$ and $\exp(\Ff)=\setof{(\exp(F'))[L/z]}{F'\in\Ff'}=\setof{N[L/z]}{N\in\exp(\Ff')}$;
			by the induction hypothesis we have that either $\exp(E')\rew_\Gg\exp(\Ff')$, or $\exp(\Ff')=\set{\exp(E')}$;
			the latter immediately implies that $\exp(\Ff)=\set{\exp(E)}$, while the former implies $\exp(E)\rew_\Gg\exp(\Ff)$ by \cref{lem:Redudes-add-SubstituteFirst}.
		\qedhere\end{bracketenumerate}
	\end{proof}

	We can now conclude with the left-to-right direction of \cref{g2eg}:
	
	\begin{lemma}\label{lem:r2l}
		If an extended term $E$ over $(\Xx,\emptyset)$ is $\Gg$-ext-convergent, then $\exp(E)$ is $\Gg$-convergent.
	\end{lemma}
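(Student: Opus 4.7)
The plan is to prove this by induction on the derivation witnessing that $E$ is $\Gg$-ext-convergent, with \cref{lem:e2ne-transfer} doing essentially all the work. By definition of ext-convergence, there is a set $\Ff$ of $\Gg$-ext-convergent extended terms with $E \erew_\Gg \Ff$, and every $F \in \Ff$ admits a strictly smaller ext-convergence derivation; the induction hypothesis then yields that $\exp(F)$ is $\Gg$-convergent for every $F \in \Ff$.

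Next, I would apply \cref{lem:e2ne-transfer} to the step $E \erew_\Gg \Ff$, which leaves two alternatives. In the first, $\exp(E) \rew_\Gg \exp(\Ff)$; since every element of $\exp(\Ff)$ is of the form $\exp(F)$ for some $F \in \Ff$, each such element is $\Gg$-convergent by the induction hypothesis, and the inductive definition of $\Gg$-convergence immediately yields that $\exp(E)$ is $\Gg$-convergent. In the second alternative, $\exp(\Ff) = \set{\exp(E)}$, which forces $\Ff$ to be nonempty and every $F \in \Ff$ to satisfy $\exp(F) = \exp(E)$; picking any such $F$ and invoking the induction hypothesis gives that $\exp(E) = \exp(F)$ is $\Gg$-convergent.

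The only point requiring care is that the second alternative cannot fail vacuously: if $\Ff$ were empty we would have $\exp(\Ff) = \emptyset \neq \set{\exp(E)}$, so the equation itself guarantees that some $F \in \Ff$ exists to supply the convergence witness. I do not expect any genuine obstacle here — all the real content (the correspondence between explicit-substitution steps that collapse under $\exp$ and actual reduction steps) has already been packaged inside \cref{lem:e2ne-transfer}, and the remainder is just a straightforward induction on the ext-convergence derivation.
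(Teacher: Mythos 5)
Your proof is correct and follows essentially the same route as the paper's: induction on the ext-convergence derivation, applying \cref{lem:e2ne-transfer} to the step $E\erew_\Gg\Ff$ and handling the two alternatives exactly as the paper does (the paper phrases the second case as ``$\exp(E)$, as an element of $\exp(\Ff)$, is $\Gg$-convergent,'' which is the same nonemptiness observation you make explicit). No gaps.
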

	
	\begin{proof}
		Induction on the fact that $E$ is $\Gg$-ext-convergent.
		Because $E$ is $\Gg$-ext-convergent, $E\erew_\Gg\Ff$ for some set $\Ff$ of $\Gg$-ext-convergent extended terms (for which we can apply the induction hypothesis).
		Using the induction hypothesis, for every extended term $F$ in $\Ff$ we obtain that $\exp(F)$ is $\Gg$-convergent,
		that is, all terms in $\exp(\Ff)$ are $\Gg$-convergent.
		By \cref{lem:e2ne-transfer}, we have that either $\exp(E)\rew_\Gg\exp(\Ff)$, or $\exp(\Ff)=\set{\exp(E)}$.
		In the latter case we already know that $\exp(E)$ (as an element of $\exp(\Ff)$) is $\Gg$-convergent.
		In the former case, we use the definition of $\Gg$-convergence, and we also obtain that $\exp(E)$ is $\Gg$-convergent.
	\end{proof}

	We now come to the opposite direction: from convergence to ext-convergence.
	We say that an extended term $E$ is \emph{simplified} if $E$ is not of the form $\esubstdots{z}{L_1}{z_1}{L_\ell}{z_\ell}$
	(i.e., if the non-extended term inside all explicit substitutions is not a variable).
	It turns out that every extended term can be ext-reduced to a simplified one (this is shown in the proof of \cref{lem:l2r}).
	Then, for a simplified extended term we can find an ext-reduction corresponding to a given standard reduction:
	
	\begin{lemma}\label{lem:simpl-transfer}
		If $E$ is a simplified extended term over $(\Xx,\emptyset)$, and if $\exp(E)\rew_\Gg\Nn$, 
		then $E\erew_\Gg\Ff$ for some $\Ff$ such that $\exp(\Ff)=\Nn$.
	\end{lemma}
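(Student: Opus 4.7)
The plan is to first write $E$ in its canonical form $\esubstdots{K}{L_1}{z_1}{L_\ell}{z_\ell}$, where $K$ is a non-extended term of type $\otyp$ whose free variables lie in $\set{z_1,\dots,z_\ell}$. Because $E$ is simplified, $K$ is not a single variable, and since every free variable of $K$ has the ground type $\otyp$ (and so cannot be applied), $K$ must be a saturated application $f\,M_1\,\dots\,M_m$ whose head $f$ is either a nonterminal, or $\bullet\symb[\dots]$ (with $m=0$), or $\oplus\symb[\dots]$ (with $m=0$). Since the head of $\exp(E)$ coincides with the head of $K$, the assumed reduction $\exp(E)\rew_\Gg\Nn$ must arise from Case (1), (2), or (3) of $\rew_\Gg$ applied at the root, with the case determined entirely by the shape of $K$. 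I would then proceed by case analysis on that shape.

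For the easy cases where $K=\bullet\symb[K_1][\dots][K_k]$ or $K=\oplus\symb[K_1][\dots][K_k]$, the set $\Nn$ consists of (all, respectively one of) the terms $K_i[L_1/z_1]\dots[L_\ell/z_\ell]$. I take $\Ff$ to be the corresponding set of extended terms of the form $\esubstdots{K_i}{L_1}{z_1}{L_\ell}{z_\ell}$; the ext-reduction $E\erew_\Gg\Ff$ follows by applying Case (2) or (3) of $\erew_\Gg$ to $K$ and then Case (6) $\ell$ times to lift the explicit substitutions outward, and $\exp(\Ff)=\Nn$ is immediate from the definition of $\exp$.

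For the nonterminal case, where $K=X\,M_1\,\dots\,M_m$ with rule $\Rr(X)=(X\,y_1\,\dots\,y_k\,u_1\,\dots\,u_q\to R)$ and $m=k+q$ for $q=\gar(X)$, the unique element of $\Nn$ is $R[M_1'/y_1,\dots,M_k'/y_k,M_{k+1}'/u_1,\dots,M_m'/u_q]$, writing $M_i'=M_i[L_1/z_1]\dots[L_\ell/z_\ell]$. I apply Case (1) of $\erew_\Gg$ to $K$, choosing fresh $\otyp$-variables $u_1',\dots,u_q'$ outside $\set{z_1,\dots,z_\ell}$, and then Case (6) $\ell$ times to lift through the outer substitutions, obtaining the singleton $\Ff=\set{\esubstdots{G}{L_1}{z_1}{L_\ell}{z_\ell}}$ for $G=\esubstdots{R[M_1/y_1,\dots,M_k/y_k,u_1'/u_1,\dots,u_q'/u_q]}{M_{k+1}}{u_1'}{M_m}{u_q'}$.

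The main obstacle lies in verifying $\exp(\Ff)=\Nn$ in the nonterminal case: I must show that performing first the simultaneous substitution $y_i\mapsto M_i$, $u_j\mapsto u_j'$ into $R$, then sequentially $u_j'\mapsto M_{k+j}$, and finally sequentially $z_j\mapsto L_j$, yields the same term as the single simultaneous substitution $y_i\mapsto M_i'$, $u_j\mapsto M_{k+j}'$ applied directly to $R$. This reduces to standard substitution commutativity: the $z_j$ do not appear in $R$ (so the outer $L_j/z_j$ substitutions act only on the $M_i$-copies inserted by earlier substitutions), the $u_j'$ are fresh (so the sequential $u_j'\mapsto M_{k+j}$ substitutions collapse into the simultaneous $u_j\mapsto M_{k+j}$ substitution without capture), and these substitutions commute with the outer $L_j/z_j$ because the $u_j'$ do not clash with the $z_j$ nor with free variables of the $M_i$.
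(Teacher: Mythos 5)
Your proposal is correct and follows essentially the same route as the paper's proof: decompose $E$ into an inner non-extended term under a stack of explicit substitutions, rule out a variable head using simplicity plus the groundness of free variables, case-split on whether the head is a nonterminal, $\bullet$, or $\oplus$, perform the matching ext-reduction on the inner term and lift it out with rule (6), and check $\exp(\Ff)=\Nn$ via standard substitution commutation with fresh variables. The only difference is cosmetic (the paper names the outer substitution map $\Xi$ and verifies the equality before lifting rather than after).
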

	
	\begin{proof}
		The extended term $E$ can be written in the form $E=\esubstdots{E'}{R_1}{z_1'}{R_s}{z_s'}$, where $E'$ is a non-extended term.
		For every term $K$ let $\Xi(K)=K[R_1/z_1']\dots[R_s/z_s']$; in particular $\exp(E)=\Xi(E')$.
		Below, we prove that there exists $\Ff'$ such that $E'\erew_\Gg\Ff'$ and $\setof{\Xi(\exp(F'))}{F'\in\Ff'}=\Nn$.
		This already gives the thesis for $\Ff=\setof{\esubstdots{F'}{R_1}{z_1'}{R_s}{z_s'}}{F'\in\Ff'}$.
		Indeed, on the one hand, because of $E'\erew_\Gg\Ff'$, due to rule (6) of the definition of $\erew_\Gg$ (applied $s$ times), we have that
		\begin{align*}
			E=\esubstdots{E'}{R_1}{z_1'}{R_s}{z_s'}\erew_\Gg\setof{\esubstdots{F'}{R_1}{z_1'}{R_s}{z_s'}}{F'\in\Ff'}=\Ff.
		\end{align*}
		On the other hand, due to $\setof{\Xi(\exp(F'))}{F'\in\Ff'}=\Nn$, we have that 
		\begin{align*}
			\exp(\Ff)&=\setof{\exp(\esubstdots{F'}{R_1}{z_1'}{R_s}{z_s'})}{F'\in\Ff'}=\setof{\Xi(\exp(F'))}{F'\in\Ff'}=\Nn.
		\end{align*}
		
		Thus, it remains to prove existence of the aforementioned set $\Ff'$.
		We have four cases depending on the shape of $E'$:
		\begin{bracketenumerate}\setcounter{enumi}{-1}
		\item   The term $E'$ is of the form $y\,M_1\,\dots\,M_m$ for some variable $y$.
			Then $y$ has to be one of the variables $z_1',\dots,z_s'$, because the whole $E$ has no free variables;
			in particular $z$ is of order $0$.
			However, this is actually impossible.
			Indeed, for $m=0$ this is impossible, because $E$ is simplified (i.e., $E'$ is not a variable),
			and for $m\geq 1$ this is impossible, because then $y$ would be of positive order.
		\item   The term $E'$ is of the form $X\,K_1\,\dots\,K_k\,L_1\,\dots\,L_\ell$ for a nonterminal $X$, where $\ell=\gar(X)$.
			Let $X\,y_1\,\dots\,y_k\,z_1\,\dots\,z_\ell\to R$ be the rule for $X$.
			Then $\exp(E)=X\,(\Xi(K_1))\,\dots\,(\Xi(K_k))\allowbreak\,(\Xi(L_1))\,\dots\,(\Xi(L_\ell))$,
			so $\exp(E)\rew_\Gg\Nn$ implies that
			\begin{align*}
				\Nn&=\set{R[\Xi(K_1)/y_1,\dots,\Xi(K_k)/y_k,\Xi(L_1)/z_1,\dots,\Xi(L_\ell)/z_\ell]}\\
				&=\set{\Xi(R[K_1/y_1,\dots,K_k/y_k,L_1/z_1,\dots,L_\ell/z_\ell]},
			\end{align*}
			where the second equality holds because $R$ does not contain the variables $z_1',\dots,z_s'$.
			We take $\Ff'=\set{\esubstdots{R[K_1/y_1,\dots,K_k/y_k,z_{s+1}'/z_1,\dots,z_{s+\ell}'/z_\ell]}{L_1}{z_{s+1}'}{L_\ell}{z_{s+\ell}'}}$.
			Then $E'\erew_\Gg\Ff'$ by rule (1) of the definition of $\erew_\Gg$.
			Simultaneously $\setof{\Xi(\exp(F'))}{F'\in\Ff'}=\set{\Xi(R[K_1/y_1,\dots,K_k/y_k,L_1/z_1,\dots,L_\ell/z_\ell])}=\Nn$.
		\item   The term $E'$ is of the form $\bullet\symb[K_1][\dots][K_k]$.
			Then $\exp(E)=\bullet\symb[\Xi(K_1)][\dots][\Xi(K_k)]$, so $\exp(E)\rew_\Gg\Nn$ implies that $\Nn=\set{\Xi(K_1),\dots,\Xi(K_k)}$.
			We take $\Ff'=\set{K_1,\dots,K_k}$.
			Then $E'\erew_\Gg\Ff'$ by rule (2) of the definition of $\erew_\Gg$.
			Simultaneously $\exp(K_i)=K_i$ for all $i\in[k]$ (the terms $K_i$ are non-extended), so
			$\setof{\Xi(\exp(F'))}{F'\in\Ff'}=\set{\Xi(K_1),\dots,\Xi(K_k)}=\Nn$.
		\item   The term $E'$ is of the form $\oplus\symb[K_1][\dots][K_k]$.
			Then $\exp(E)=\oplus\symb[\Xi(K_1)][\dots][\Xi(K_k)]$, so $\exp(E)\rew_\Gg\Nn$ implies that $\Nn=\set{\Xi(K_i)}$ for some $i\in[k]$.
			We take $\Ff'=\set{K_i}$.
			Then $E'\erew_\Gg\Ff'$ by rule (3) of the definition of $\erew_\Gg$.
			Simultaneously $\exp(K_i)=K_i$, so
			$\setof{\Xi(\exp(F'))}{F'\in\Ff'}=\set{\Xi(K_i)}=\Nn$.
		\qedhere\end{bracketenumerate}
	\end{proof}

	In the last \lcnamecref{lem:l2r} we prove the right-to-left direction of \cref{g2eg}:

	\begin{lemma}\label{lem:l2r}
		If $\exp(E)$ is $\Gg$-convergent, for an extended term $E$ over $(\Xx,\emptyset)$, 
		then $E$ is $\Gg$-ext-convergent.
	\end{lemma}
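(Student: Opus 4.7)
The plan is to argue by induction on the lexicographic pair (size of the $\Gg$-convergence derivation of $\exp(E)$, number of explicit substitutions in $E$), splitting on whether $E$ is \emph{simplified} in the sense defined just before \cref{lem:simpl-transfer}.

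If $E$ is simplified, the $\Gg$-convergence hypothesis yields $\exp(E)\rew_\Gg\Nn$ with each element of $\Nn$ being $\Gg$-convergent via a strictly smaller derivation. \cref{lem:simpl-transfer} then provides a set $\Ff$ of extended terms with $E\erew_\Gg\Ff$ and $\exp(\Ff)=\Nn$; moreover, every $F\in\Ff$ is still over $(\Xx,\emptyset)$, since the ext-reduction rules introduce no new free variables. For each such $F$, $\exp(F)\in\Nn$ has a strictly smaller convergence derivation, so the induction hypothesis gives that $F$ is $\Gg$-ext-convergent, whence $E$ is $\Gg$-ext-convergent by definition.

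If $E$ is not simplified, then $E=\esubstdots{z}{L_1}{z_1}{L_\ell}{z_\ell}$, and since $E$ has no free variables, $z$ coincides with some $z_j$. We produce a single ext-reduction step $E\erew_\Gg\set{E^*}$ whose derivation applies rule~(4) when $z=z_1$ (sending $\esubst{z}{L_1}{z_1}$ to $\set{L_1}$) or rule~(5) otherwise (sending $\esubst{z}{L_1}{z_1}$ to $\set{z}$), wrapped in $\ell-1$ nested applications of rule~(6) that propagate this step through the outer substitutions. The resulting $E^*$ is again over $(\Xx,\emptyset)$, has one fewer explicit substitution than $E$, and satisfies $\exp(E^*)=\exp(E)$. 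Since the outer component of the induction measure is unchanged while the inner strictly decreases, the induction hypothesis gives that $E^*$ is $\Gg$-ext-convergent, hence so is $E$.

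The main obstacle lies in the bookkeeping: verifying that the non-simplified case preserves $\exp$ while strictly decreasing the explicit-substitution count (which uses $z\in\set{z_1,\dots,z_\ell}$, guaranteed by $E$ having no free variables), and that in the simplified case the lifted set $\Ff$ indeed consists of extended terms over $(\Xx,\emptyset)$ whose expansions exhaust $\Nn$, so that the outer induction hypothesis applies to each $F\in\Ff$.
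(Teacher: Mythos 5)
Your proof is correct and follows essentially the same route as the paper: a lexicographic induction on the convergence derivation of $\exp(E)$ and the number of explicit substitutions, with the simplified case handled via \cref{lem:simpl-transfer} and the non-simplified case dispatched by rules (4)/(5) wrapped in rule (6), preserving $\exp$ while decreasing the substitution count.
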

	
	\begin{proof}
		Induction on the fact that $\exp(E)$ is $\Gg$-convergent, and internally on the number of explicit substitutions in $E$.
		One case is that $E$ is simplified.
		Because $\exp(E)$ is $\Gg$-convergent, $\exp(E)\rew_\Gg\Nn$ for some set $\Nn$ of $\Gg$-convergent terms (for which we can apply the induction hypothesis).
		By \cref{lem:simpl-transfer}, there is a set $\Ff$ such that $E\erew_\Gg\Ff$ and $\exp(\Ff)=\Nn$.
		The latter means that $\Nn=\setof{\exp(F)}{F\in\Ff}$.
		Using the induction hypothesis for every term in $\Nn$, we obtain that all extended terms $F$ in $\Ff$ are $\Gg$-ext-convergent.
		Due to $E\erew_\Gg\Ff$, this implies that $E$ is $\Gg$-ext-convergent.
		
		The opposite case is that the extended term $E$ is not simplified, that is, it is of the form $E=\esubstdots{\esubst{z}{L_1}{z_1}}{L_2}{z_2}{L_k}{z_k}$, 
		where $z$ is one of the variables $z_1,\dots,z_k$ (the whole $E$ does not have free variables).
		Suppose first that $z=z_1$, and take $F=\esubstdots{L_1}{L_2}{z_2}{L_k}{z_k}$.
		Then
		\begin{align*}
			\exp(E)=z[L_1/z_1][L_2/z_2]\dots[L_k/z_k]=L_1[L_2/z_2]\dots[L_k/z_k]=\exp(F).
		\end{align*}
		Notice that $F$ has less explicit substitutions than $E$ (the term $L_1$ is non-extended),
		so we can use the internal induction hypothesis, obtaining that $F$ is $\Gg$-ext-convergent.
		Moreover, we have that $\esubst{z}{L_1}{z_1}\erew_\Gg\set{L_1}$ by rule (4) of the definition of $\erew_\Gg$,
		thus also $E\erew_\Gg\set{F}$ by rule (6) of this definition (used $k-1$ times).
		In consequence, also $E$ is $\Gg$-ext-convergent.
		
		When $z=z_i$ for $i\geq 2$, we proceed similarly. Taking $F=\esubstdots{z}{L_2}{z_2}{L_k}{z_k}$
		we have that
		\begin{align*}
			\exp(E)=z[L_1/z_1][L_2/z_2]\dots[L_k/z_k]=z[L_2/z_2]\dots[L_k/z_k]=\exp(F).
		\end{align*}
		Because $F$ has less explicit substitutions than $E$,
		we can use the internal induction hypothesis, obtaining that $F$ is $\Gg$-ext-convergent.
		Moreover, we have that $\esubst{z}{L_1}{z_1}\erew_\Gg\set{z}$ by rule (5) of the definition of $\erew_\Gg$,
		thus also $E\erew_\Gg\set{F}$ by rule (6) of this definition (used $k-1$ times).
		In consequence, also $E$ is $\Gg$-ext-convergent.
	\end{proof}

\section{Additional details for the proof of Lemma~\ref{lem:ext-reduces2trans-reduces}}\label{app:ext-reduces2trans-reduces}

	We now complement the proof of \cref{lem:ext-reduces2trans-reduces} with missing details.
	Recall that we are given an extended term $E$ over $(\Xx,\Zz)$ and a function $Z\in2^\Zz$.
	Knowing that $E\erew_\Gg\Ff$ and that $\tr(\emptyset,Z,F)$ is $\Gg^\dag$-convergent for every $F\in\Ff$, 
	we have to prove that $\tr(\emptyset,Z,E)$ is also $\Gg^\dag$-convergent.

	As already said, we proceed by induction on the definition of $E\erew_\Gg\Ff$, and we analyze particular cases of this definition.

	\begin{bracketenumerate}
	\item	Suppose that 
		\begin{align*}
			E&=X\,K_1\,\dots\,K_k\,L_1\,\dots\,L_\ell &\mbox{and}\\
			\Ff&=\set{\esubstdots{R[K_1/y_1,\dots,K_k/y_k,\allowbreak z_1'/z_1,\dots,z_\ell'/z_\ell]}{L_1}{z_1'}{L_\ell}{z_\ell'}},
		\end{align*}
		where $\ell=\gar(X)$, and $\Rr(X)=(X\,y_1\,\dots\,y_k\,z_1\,\dots\,z_\ell\to R)$, and $z_1',\dots,z_\ell'$ are fresh variables of type $\otyp$ not appearing in $\Zz$.
		For every $s\in\set{0,\dots,\ell}$ and every function $A\in2^{[\ell-s]}$, let
		\begin{align*}
			P_{s,A}&=\tr(A,Z,X\,K_1\,\dots\,K_k\,L_1\,\dots\,L_s),\\
			Z_{s,A}&=Z[z_i'\mapsto A(\ell+1-i)\mid i\in\set{s+1,s+2,\dots,\ell}],&\hspace{-3em}\mbox{and}\\
			Q_{s,A}&=\tr(\emptyset,Z_{s,A},\esubstdots{R[K_1/y_1,\dots,K_k/y_k,z_1'/z_1,\dots,z_\ell'/z_\ell]}{L_1}{z_1'}{L_s}{z_s'}).
		\end{align*}
		We prove, by induction on $s$, that if $Q_{s,A}$ is $\Gg^\dag$-convergent then also $P_{s,A}$ is $\Gg^\dag$-convergent.
		For $s=\ell$ and $A=\emptyset$ this gives the thesis (because $\tr(\emptyset,Z,E)=Q_{\ell,\emptyset}$ and $\setof{\tr(\emptyset,Z,F)}{F\in\Ff}=\set{Q_{\ell,\emptyset}}$).
		
		Suppose first that $s=0$.
		Then
		\begin{align*}
			P_{s,A} &= \tr(A,Z,X\,K_1\,\dots\,K_k)\\
				&= X^\dag_A\,(\tr(B,Z,K_1))_{B\in2^{[\gar(K_1)]}}\,\dots\,(\tr(B,Z,K_k))_{B\in2^{[\gar(K_k)]}},
		\end{align*}
		and, by \cref{trans-subst-com},
		\begin{align*}
			Q_{s,A} &= \tr(\emptyset,Z_{s,A},R[K_1/y_1,\dots,K_k/y_k,z_1'/z_1,\dots,z_\ell'/z_\ell])\\
				&= \tr(\emptyset,Z_{s,A},R[z_1'/z_1,\dots,z_\ell'/z_\ell][K_1/y_1,\dots,K_k/y_k])\\
				&= (\tr(\emptyset,Z_{s,A},R[z_1'/z_1,\dots,z_\ell'/z_\ell]))[\tr(B,Z_{s,A},K_i)/y^\dag_{i,B}\mid i\in[k],B\in2^{[\gar(K_i)]}].
		\end{align*}
		Because the only variables from $\dom(Z_{s,A})$ that appear in $R[z_1'/z_1,\dots,z_\ell'/z_\ell]$ are $z_1',\dots,z_\ell'$, we have that
		\begin{align*}
			\tr(\emptyset,Z_{s,A},R[z_1'/z_1,\dots,z_\ell'/z_\ell])\hspace{-1em}&\\
			&=\tr(\emptyset,[z_i'\mapsto A(\ell+1-i)\mid i\in[\ell]],R[z_1'/z_1,\dots,z_\ell'/z_\ell])\\
			&=\tr(\emptyset,[z_i\mapsto A(\ell+1-i)\mid i\in[\ell]],R).
		\end{align*}
		Likewise, because $z_1',\dots,z_\ell'$ do not appear in $K_1,\dots,K_k$, we have that $\tr(B,Z_{s,A},K_i)=\tr(B,Z,K_i)$ for all $i\in[k]$ and $B\in2^{[\gar(K_i)]}$.
		In consequence,
		\begin{align*}
			Q_{s,A}\hspace{-1em}&\\
			& = (\tr(\emptyset,[z_i\mapsto A(\ell+1-i)\mid i\in[\ell]],R))[\tr(B,Z,K_i)/y^\dag_{i,B}\mid i\in[k],B\in2^{[\gar(K_i)]}].
		\end{align*}
		Recall that the rule for $X_A^\dag$ is
		\begin{align*}
			X_A^\dag\,(y_{1,B}^\dag)_{B\in2^{[\gar(y_1)]}}\,\dots\,(y_{k,B}^\dag)_{B\in2^{[\gar(y_k)]}}\to\tr(\emptyset,[z_i\mapsto A(\ell+1-i)\mid i\in[\ell]],R),
		\end{align*}
		thus $P_{s,A}\rew_{\Gg^\dag}\set{Q_{s,A}}$;
		it follows that if $Q_{s,A}$ is $\Gg^\dag$-convergent then also $P_{s,A}$ is $\Gg^\dag$-convergent.
		
		Next, suppose that $s\geq 1$.
		Let us denote
		\begin{align*}
			P^0&=P_{s-1,A[\ell+1-s\mapsto0]}, & Q^0&=Q_{s-1,A[\ell+1-s\mapsto0]}, & L'=\tr(\emptyset,Z,L_s). \\
			P^1&=P_{s-1,A[\ell+1-s\mapsto1]}, & Q^1&=Q_{s-1,A[\ell+1-s\mapsto1]},
		\end{align*}
		Simultaneously $L'=\tr(\emptyset,Z_{s,A},L_s)$, because variables $z_{s+1}',z'_{s+2},\dots,z_\ell'$ do not appear in $L_s$.
		By definition we have that 
		\begin{align*}
			P_{s,A}&=\oplus\symb[P^0][\bullet\symb[P^1][L']]&&\mbox{and}&
			Q_{s,A}&=\oplus\symb[Q^0][\bullet\symb[Q^1][L']].
		\end{align*}
		Recall that, by definition, a term $M$ is $\Gg^\dag$-convergent if and only if $M\rew_{\Gg^\dag}\Nn$ for some set $\Nn$ of $\Gg^\dag$-convergent terms.
		We can have $Q_{s,A}\rew_{\Gg^\dag}\Nn$ only for $\Nn=\set{Q^0}$ and for $\Nn=\set{\bullet\symb[Q^1][L']}$,
		and we can have $\bullet\symb[Q^1][L']\rew_{\Gg^\dag}\Nn$ only for $\Nn=\set{Q^1,L'}$.
		By assumption $Q_{s,A}$ is $\Gg^\dag$-convergent, so either $Q^0$ is $\Gg^\dag$-convergent, 
		or both $Q^1$ and $L'$ are $\Gg^\dag$-convergent.
		In the former case, $P^0$ is $\Gg^\dag$-convergent by the induction hypothesis, and we have $P_{s,A}\rew_{\Gg^\dag}\set{P^0}$,
		so $P_{s,A}$ is $\Gg^\dag$-convergent, as required.
		In the latter case, $P^1$ is $\Gg^\dag$-convergent by the induction hypothesis,
		and we have $\bullet\symb[P^1][L']\rew_{\Gg^\dag}\set{P^1,L'}$ and $P_{s,A}\rew_{\Gg^\dag}\set{\bullet\symb[P^1][L']}$,
		so also $P_{s,A}$ is $\Gg^\dag$-convergent, as required.
	\item	Suppose that 
		\begin{align*}
			E&=\bullet\symb[K_1][\dots][K_k] &\mbox{and}&&
			\Ff&=\set{K_1,\dots,K_k}.
		\end{align*}
		Then, by definition,
		\begin{align*}
			\tr(\emptyset,Z,E)=\bullet\symb[\tr(\emptyset,Z,K_1)][\dots][\tr(\emptyset,Z,K_k)]\rew_{\Gg^\dag}\set{\tr(\emptyset,Z,K_1),\dots,\tr(\emptyset,Z,K_k)}.
		\end{align*}
		By assumption, elements of the latter set are $\Gg^\dag$-convergent, so also $\tr(\emptyset,Z,E)$ is $\Gg^\dag$-convergent.
	\item	Suppose that 
		\begin{align*}
			E&=\oplus\symb[K_1][\dots][K_k] &\mbox{and}&&
			\Ff&=\set{K_i} &\mbox{for some $i\in[k]$}.
		\end{align*}
		Then, by definition,
		\begin{align*}
			\tr(\emptyset,Z,E)=\oplus\symb[\tr(\emptyset,Z,K_1)][\dots][\tr(\emptyset,Z,K_k)]\rew_{\Gg^\dag}\set{\tr(\emptyset,Z,K_i)}.
		\end{align*}
		By assumption $\tr(\emptyset,Z,K_i)$ is $\Gg^\dag$-convergent, so also $\tr(\emptyset,Z,E)$ is $\Gg^\dag$-convergent.
	\item	Suppose that 
		\begin{align*}
			E&=\esubst{z}{L}{z} &\mbox{and}&&
			\Ff&=\set{L}.
		\end{align*}
		Then 
		\begin{align*}
			&\tr(\emptyset,Z,E)=\oplus\symb[\Omega][\bullet\symb[\bullet][\tr(\emptyset,Z,L)]]\rew_{\Gg^\dag}\set{\bullet\symb[\bullet][\tr(\emptyset,Z,L)]},
			\\
			&{\bullet}\symb[\bullet][\tr(\emptyset,Z,L)]\rew_{\Gg^\dag}\set{\bullet,\tr(\emptyset,Z,L)},
			\qquad\mbox{and}\qquad
			\bullet\rew_{\Gg^\dag}\emptyset.
		\end{align*}
		By assumption $\tr(\emptyset,Z,L)$ is $\Gg^\dag$-convergent, so also $\tr(\emptyset,Z,E)$ is $\Gg^\dag$-convergent.
	\item	Suppose that 
		\begin{align*}
			E&=\esubst{z'}{L}{z} &\mbox{and}&&
			\Ff&=\set{z'},&\mbox{where $z'\neq z$}.
		\end{align*}
		Denote $P=\tr(z',Z,L)$; we simultaneously have $P=\tr(z',Z[z\mapsto 0],L)=\tr(z',\allowbreak Z[z\mapsto\nobreak1],L)$.
		Then 
		\begin{align*}
			&\tr(\emptyset,Z,E)=\oplus\symb[P][\bullet\symb[P][\tr(\emptyset,Z,L)]]\rew_{\Gg^\dag}\set{P}
		\end{align*}
		By assumption $P$ is $\Gg^\dag$-convergent, so also also $\tr(\emptyset,Z,E)$ is $\Gg^\dag$-convergent.
	\item	The last case, when 
		\begin{align*}
			E&=\esubst{E_0}{L}{z}, &
			\Ff&=\set{\esubst{E_1}{L}{z},\dots,\esubst{E_k}{L}{z}},&&\mbox{and}
			&E_0\erew_\Gg\set{E_1,\dots,E_k}
		\end{align*}
		was completely resolved in \cref{sec:correctness}, so we do not repeat the proof here.
	\end{bracketenumerate}

\section{Additional details for the proof of Lemma~\ref{lem:trans-reduces2ext-reduces}}\label{app:trans-reduces2ext-reduces}

	In this section we give missing details for the proof of \cref{lem:trans-reduces2ext-reduces}.
	Recall that we are given an extended term $E$ over $(\Xx,\Zz)$, which is not a variable, and a function $Z\in2^\Zz$.
	Knowing that $\tr(\emptyset,Z,E)$ is $\Gg^\dag$-convergent in $n$ steps,
	we have to prove that there exists a set $\Ff$ of extended terms such that
	$E\erew_\Gg\Ff$ and $\tr(\emptyset,Z,F)$ is $\Gg^\dag$-convergent in less than $n$ steps for every $F\in\Ff$.

	The proof is by induction on the number of explicit substitutions in $E$.
	Depending on the shape of $E$, we have six cases.

	\begin{bracketenumerate}
	\item	Suppose that $E$ starts with an application.
		Then necessarily it can be written as
		\begin{align*}
			E&=X\,K_1\,\dots\,K_k\,L_1\,\dots\,L_\ell,
		\end{align*}
		where $\ell=\gar(X)$.
		In particular, notice that instead of the nonterminal $X$ we cannot have a variable, because (by definition of an extended term) all variables in $\Zz$ are of type $\otyp$.
		Let $X\,y_1\,\dots\,y_k\,z_1\,\dots\,z_\ell\to R$ be the rule for $X$, and let $z_1',\dots,z_\ell'$ be fresh variables of type $\otyp$ not appearing in $\Zz$.
		In such a situation we have that $E\erew_\Gg\set{F}$ for
		\begin{align*}
			F&=\esubstdots{R[K_1/y_1,\dots,K_k/y_k,\allowbreak z_1'/z_1,\dots,z_\ell'/z_\ell]}{L_1}{z_1'}{L_\ell}{z_\ell'}.
		\end{align*}
		For every $s\in\set{0,\dots,\ell}$ and every function $A\in2^{[\ell-s]}$, let us define $P_{s,A}$, $Z_{s,A}$, and $Q_{s,A}$ as in \cref{app:ext-reduces2trans-reduces}.
		We prove, by induction on $s$, that if $P_{s,A}$ is $\Gg^\dag$-convergent in $n_s$ steps (for some $n_s\in\Nat$) then $Q_{s,A}$ is $\Gg^\dag$-convergent in $n_s-1$ steps.
		For $s=\ell$ and $A=\emptyset$ this gives the thesis, taking $\Ff=\set{F}$
		(because $\tr(\emptyset,Z,E)=P_{\ell,\emptyset}$ and $\tr(\emptyset,Z,F)=Q_{\ell,\emptyset}$).
		
		Suppose first that $s=0$.
		Recall that in the proof of \cref{lem:ext-reduces2trans-reduces} we have observed that $P_{s,A}\rew_{\Gg^\dag}\set{Q_{s,A}}$.
		Actually, if $P_{s,A}\rew_{\Gg^\dag}\Nn$ then necessarily $\Nn=\set{Q_{s,A}}$
		(because $P_{s,A}$ is a nonterminal with applied arguments, and in this case the reduction is completely deterministic).
		By assumption $P_{s,A}$ is $\Gg^\dag$-convergent in $n_s$ steps, which, by definition, immediately implies that $Q_{s,A}$ is $\Gg^\dag$-convergent in $n_s-1$ steps, as required.
		
		Next, suppose that $s\geq 1$.
		Using the definition of $P^0$, $P^1$, $Q^0$, $Q^1$, and $L'$ from \cref{app:ext-reduces2trans-reduces},
		we have that 
		\begin{align*}
			P_{s,A}&=\oplus\symb[P^0][\bullet\symb[P^1][L']]&&\mbox{and}&
			Q_{s,A}&=\oplus\symb[Q^0][\bullet\symb[Q^1][L']].
		\end{align*}
		We can have $P_{s,A}\rew_{\Gg^\dag}\Nn$ only for $\Nn=\set{P^0}$ and for $\Nn=\set{\bullet\symb[P^1][L']}$,
		and we can have $\bullet\symb[P^1][L']\rew_{\Gg^\dag}\Nn$ only for $\Nn=\set{P^1,L'}$.
		By assumption $P_{s,A}$ is $\Gg^\dag$-convergent in $n_s$ steps, so either $P^0$ is $\Gg^\dag$-convergent in $n_{s-1}=n_s-1$ steps,
		or both $P^1$ and $L'$ are $\Gg^\dag$-convergent in, respectively, $n_{s-1}$ and $n_s-n_{s-1}-2$ steps, for some $n_{s-1}\in\Nat$.
		In the former case, $Q^0$ is $\Gg^\dag$-convergent in $n_{s-1}-1=n_s-2$ steps by the induction hypothesis, and we have $Q_{s,A}\rew_{\Gg^\dag}\set{Q^0}$,
		so $Q_{s,A}$ is $\Gg^\dag$-convergent in $n_s-1$ steps, as required.
		In the latter case, $Q^1$ is $\Gg^\dag$-convergent in $n_{s-1}-1$ steps by the induction hypothesis,
		and we have $\bullet\symb[Q^1][L']\rew_{\Gg^\dag}\set{Q^1,L'}$ and $Q_{s,A}\rew_{\Gg^\dag}\set{\bullet\symb[Q^1][L']}$,
		so $Q_{s,A}$ is $\Gg^\dag$-convergent in $(n_{s-1}-1)+(n_s-n_{s-1}-2)+2=n_s-1$ steps, as required.
	\item	Suppose that $E$ starts with $\bullet$.
		Then it can be written as $E=\bullet\symb[K_1][\dots][K_k]$, and, by definition,
		\begin{align*}
			\tr(\emptyset,Z,E)=\bullet\symb[\tr(\emptyset,Z,K_1)][\dots][\tr(\emptyset,Z,K_k)],
		\end{align*}
		and $\tr(\emptyset,Z,E)\rew_{\Gg^\dag}\Nn$ only for $\Nn=\set{\tr(\emptyset,Z,K_1),\dots,\tr(\emptyset,Z,K_k)}$.
		We know that $\tr(\emptyset,Z,E)$ is $\Gg^\dag$-convergent in $n$ steps,
		so necessarily the terms $\tr(\emptyset,Z,K_i)$ for $i\in[k]$ are $\Gg^\dag$-convergent in $n_i$ steps, for some numbers $n_i$ such that
		$n_1+\dots+n_k+1=n$.
		In particular all $n_i$ are smaller than $n$, so $\Ff=\set{K_1,\dots,K_k}$ satisfies the thesis, because $E\erew_\Gg\Ff$, by definition.
	\item	Suppose that $E$ starts with $\oplus$.
		Then it can be written as $E=\oplus\symb[K_1][\dots][K_k]$, and, by definition,
		\begin{align*}
			\tr(\emptyset,Z,E)=\oplus\symb[\tr(\emptyset,Z,K_1)][\dots][\tr(\emptyset,Z,K_k)],
		\end{align*}
		and $\tr(\emptyset,Z,E)\rew_{\Gg^\dag}\Nn$ only when $\Nn=\set{\tr(\emptyset,Z,K_i)}$ for some $i\in[k]$.
		We know that $\tr(\emptyset,Z,E)$ is $\Gg^\dag$-convergent in $n$ steps,
		so necessarily $\tr(\emptyset,Z,K_i)$, for some $i\in[k]$, is $\Gg^\dag$-convergent in $n-1$ steps.
		In consequence $\Ff=\set{K_i}$ satisfies the thesis, because $E\erew_\Gg\Ff$, by definition.
	\item	Suppose that $E=\esubst{z}{L}{z}$.
		Then 
		\begin{align*}
			&\tr(\emptyset,Z,E)=\oplus\symb[\Omega][\bullet\symb[\bullet][\tr(\emptyset,Z,L)]].
		\end{align*}
		Notice that $\tr(\emptyset,Z,E)\rew_{\Gg^\dag}\Nn$ only for $\Nn=\set{\Omega}$ and for $\Nn=\set{\bullet\symb[\bullet][\tr(\emptyset,Z,L)]}$.
		In turn, $\Omega$ is not $\Gg^\dag$-convergent (in any number of steps),
		and $\bullet\symb[\bullet][\tr(\emptyset,Z,L)]\rew_{\Gg^\dag}\Nn$ only for $\Nn=\set{\bullet,\tr(\emptyset,Z,L)}$,
		and $\bullet\rew_{\Gg^\dag}\Nn$ only for $\Nn=\emptyset$.
		We know that $\tr(\emptyset,Z,E)$ is $\Gg^\dag$-convergent in $n$ steps, so, by the above, $\tr(\emptyset,Z,L)$ is $\Gg^\dag$-convergent in $n-3$ steps.
		In consequence $\Ff=\set{L}$ satisfies the thesis, because $E\erew_\Gg\Ff$, by definition.
	\item	Suppose that $E=\esubst{z'}{L}{z}$ for some $z'\neq z$.
		Denote $P=\tr(z',Z,L)$; we simultaneously have $P=\tr(z',Z[z\mapsto 0],L)=\tr(z',Z[z\mapsto 1],L)$, so
		\begin{align*}
			&\tr(\emptyset,Z,E)=\oplus\symb[P][\bullet\symb[P][\tr(\emptyset,Z,L)]].
		\end{align*}
		Notice that $\tr(\emptyset,Z,E)\rew_{\Gg^\dag}\Nn$ only for $\Nn=\set{P}$ and for $\Nn=\set{\bullet\symb[P][\tr(\emptyset,Z,L)]}$,
		and $\bullet\symb[P][\tr(\emptyset,Z,L)]\rew_{\Gg^\dag}\Nn$ only for $\Nn=\set{P,\tr(\emptyset,Z,L)}$.
		We know that $\tr(\emptyset,Z,E)$ is $\Gg^\dag$-convergent in $n$ steps, so, by the above, $P$ is $\Gg^\dag$-convergent in less than $n$ steps.
		In consequence $\Ff=\set{z'}$ satisfies the thesis, because $E\erew_\Gg\Ff$, by definition.
	\item	Finally, suppose that $E=\esubst{E_0}{L}{z}$, where $E_0$ is not a variable.
		This case was completely resolved in \cref{sec:correctness}, so we do not repeat the proof here.
	\end{bracketenumerate}

\end{document}